\newcommand{\actset}{\mathcal{A}}
\newcommand{\procset}{\mathsf{P}}
\newcommand{\exec}[1]{\mathrel{\xrightarrow{#1}}}
\newcommand{\nexec}[1]{\mathalpha{\not\exec{#1}}}
\newcommand{\pexec}[1]{\mathrel{\stackrel{#1}{\dashrightarrow}}}
\newcommand{\npexec}{\mathalpha{\not\dashrightarrow}}
\newcommand{\tick}{\mapsto}
\newcommand{\ntick}{\mathalpha{\not\mapsto}}
\newcommand{\success}{\omega}
\newcommand{\torp}[1]{#1}
\newcommand{\terma}[1]{\exec{a}\check}
\newcommand{\nterma}[1]{\nexec\check}
\newcommand{\termt}[1]{\tick\check}
\newcommand{\ntermt}[1]{\ntick\check}
\newcommand{\po}{ \parallel_L }
\def \lparal{\mathbin{\setbox0=\hbox{$\|$}
    \dimen0=\dp0 \advance\dimen0 -1.5pt \dp0=\dimen0
    \underline{\kern-1.5pt\box0\kern1.5pt}}}
\newcommand{\aproc}{{a}}
\newcommand{\deadlock}{\delta}
\newcommand{\sosaltact}{\mathalpha{\ensuremath{}}}
\newcommand{\sospchoiceone}{\mathalpha{\ensuremath{}}}
\newcommand{\sospchoicetwo}{\mathalpha{\ensuremath{}}}
\newcommand{\sospriority}{\mathalpha{\ensuremath{}}}
\newcommand{\sosrepact}{\mathalpha{\ensuremath{}}}
\newcommand{\sosparactone}{\mathalpha{\ensuremath{}}}
\newcommand{\sosparacttwo}{\mathalpha{\ensuremath{}}}
\newcommand{\sosparprobone}{\mathalpha{\ensuremath{}}}
\newcommand{\sosparprobtwo}{\mathalpha{\ensuremath{}}}
\newcommand{\sosencterm}{\mathalpha{\ensuremath{}}}
\newcommand{\sosencact}{\mathalpha{\ensuremath{}}}
\newcommand{\soshideterm}{\mathalpha{\ensuremath{}}}
\newcommand{\soshideactone}{\mathalpha{\ensuremath{\langle\textsf{hide-act}_1\rangle}}}
\newcommand{\soshideacttwo}{\mathalpha{\ensuremath{\langle\textsf{hide-act}_2 \rangle}}}
\newcounter{sosrule}
\newcommand{\sosrule}[2]{{\dfrac{#1}{#2}%
{\>\thesosrule \addtocounter{sosrule}{1}}}\ \ \ }
\newcommand{\sys}{\mathcal{P}}
\newcommand{\system}{(\nstates, \pstates, \atransym, \ptransym)}
\newcommand{\prob}{P}
  \newcommand{\bisimilar}[1][]{%
     \setbox0=\hbox{\kern-.1ex{$\leftrightarrow$}\kern-.1ex}
     \setbox1=\vbox{\hbox{\raise .1ex \box0}\hrule}%
     \ensuremath{\mathrel{\hbox{\kern.1ex\box1\kern.1ex}_{#1}}}
   }
\newcommand{\pbar}{\bar{p}}
\newcommand{\ptran}[1]{\pexec{#1}}
\newcommand{\ptransym}{\mathalpha{\dashrightarrow}}
\newcommand{\pitran}[1]{\ptran{#1}}
\newcommand{\atran}[1]{\exec{#1}}
\newcommand{\atransym}{\mathalpha{\rightarrow}}
\newcommand{\nstates}{S_n}
\newcommand{\pstates}{S_p}
\newcommand{\states}{S}
\newenvironment{LONGVERSION}{}{}
\newenvironment{SHORTVERSION}{}{}
\newcommand{\slabl}[1]{\ar@(r,u)@{...}^{{#1\hspace{1.1mm}}}}
\newcommand{\slabr}[1]{\ar@(l,u)@{...}_{\hspace{1.1mm}#1}}
\newcommand{\shorten}[1]{}
\newcommand{\Result}{\mathsf{Res}}
\newcommand{\tests}{\mathcal{T}}
\newcommand{\menu}{M}
\newcommand{\obs}{\mathcal{O}}
\newcommand{\extch}{\mathsf{\sum}}
\newcommand{\prio}{\mathrel{\Theta}}
\newcommand{\barbed}{\mathrel{\approx_{\obs}}}
\newcommand{\testing}{\mathrel{\approx_{\mathcal{T}}}}
\newcommand{\algebra}{\textrm{CSP}_p}
\newcommand{\pst}{\circ}
\newcommand{\nst}{\circ}
\newcommand{\xmenu}{X}
\newcommand{\ratexpr}{\mathcal{R}}
\newcommand{\pop}{\bigoplus}
\newcommand{\poa}{\parallel}
\newcommand{\Det}{\mathsf{Det}}
\begin{document}

\title{Testing Probabilistic Processes: \\Can Random Choices Be Unobservable?}
\author{Sonja Georgievska and Suzana Andova}

\institute{{Department of Mathematics and Computer Science,
Eindhoven University of Technology, P.O.~Box
513, 5600 MB~~Eindhoven, The Netherlands} \\
{\tt \footnotesize s.georgievska@tue.nl}, {\tt \footnotesize
s.andova@tue.nl}}

\maketitle

\begin{abstract}
A central paradigm behind process semantics based on observability
and testing is that the exact moment of occurring of an internal
nondeterministic choice  is unobservable. It is natural, therefore,
for this property to hold when the internal choice is quantified
with probabilities. However, ever since probabilities have been
introduced in process semantics, it has been a challenge to preserve
the unobservability of the random choice, while not violating the
other laws of process theory and probability theory. This paper
addresses this problem. It proposes two semantics for processes
where the internal nondeterminism has been quantified with
probabilities. The first one is based on the notion of testing, i.e.
interaction between the process and its environment. The second one,
the probabilistic ready trace semantics, is based on the notion of
observability.  Both are shown to coincide. They are also preserved
under the standard operators.
\end{abstract}


\excludeversion{SHORTVERSION}

\section{Introduction}\label{sec:introduction}

A central paradigm behind process semantics based on observability
(e.g. \cite{csp}) is that the exact moment of occurring of an
internal nondeterministic choice is unobservable. This is because an
observer does not have insight into the internal structure of a
process but only in the externally visible actions. Unobservability
of  internal choice has been also accomplished by the testing theory
\cite{DH84}\footnote{In fact, the process semantics based on
\cite{csp} and \cite{DH84} do coincide for a broad class of
processes, as shown in \cite{D87}.}. It is natural, therefore, for
this property to hold when the internal choice is quantified with
probabilities. However, it turned out that unobservability of
internal probabilistic choice is not trivial to achieve in
probabilistic testing theory. To explain why, we start with an
example.

\paragraph{Motivation}Consider a machine which flips a fair coin
internally. A user can guess the result of the flipping by pressing
a ``head'' or a ``tail'' button.
 If the user has guessed correctly, the
machine offers a prize. The machine can be modeled by process graph
(or shortly process) $s$ in Fig. \ref{fig_machine}  and the user can
be modeled by process $u$ in Fig. \ref{fig_machine}. The user is
happy if, after pressing a button, a prize follows.

Let the user and the machine interact, i.e. let them synchronize on
all actions (except on the ``user happiness'' reporting action
$\smiley$). In terms of testing theory \cite{DH84},   process $s$ is
tested with test $u$.
 Intuitively, the probability that the user has guessed the
 output of flipping  is $\frac{1}{2}$. That is, the probability of a
 $\smiley$ action being reported is $\frac{1}{2}$. However,  most
  of the existing approaches for probabilistic testing, in particular probabilistic may/must testing
   \cite{YL92,JY02,segala_testing96,DGHM08, PNM07},
   do not give this answer.
Consider the synchronization $s \poa u$ represented by the graph in
Fig. \ref{fig_machine}, where actions are hidden after they have
synchronized.  In order to compute the probability of $\smiley$
being reported, the approaches in
\cite{YL92,JY02,segala_testing96,DGHM08, PNM07} use
\emph{schedulers}, that have insight into the internal structure of
the graph of the synchronized system. Each scheduler resolves the
nondeterminism in the nondeterministic nodes of $s \poa u$ and
yields a fully probabilistic system. For $s \mathrel{\poa} u$ in
Fig. \ref{fig_machine}, there are four possible schedulers, which
yield the following set of probabilities with which $s$ passes the
test $u$: $\{0, \frac{1}{2}, 1\}$. We can see that, because the
power of the schedulers is unrestricted, unrealistic upper and lower
bounds for the probabilities are obtained. Observe that this happens
due to the effect of ``cloning'' the nondeterminism after hiding the
synchronized actions. The choice between $h$ and $t$ has been
``cloned'' in both futures after the probabilistic choice in $s \poa
u$. When resolving nondeterminism in $s \poa u$, a scheduler assumes
that the user has unrealistic power to \emph{see} the result of the
coin-flipping \emph{before} guessing.

 The above example challenges us to reconsider the design choice to hide
actions after synchronization. Namely, although hiding is harmless
and actually useful in \cite{DH84}, and helps to abstract away from
unnecessary information, in probabilistic testing  it may actually
``hide too much'' and produce overestimation of the probability
information about the system. It is highly undesirable to obtain
lower and upper probability bounds of $0$ and $1$ resp. for the
probabilistic behaviour of a simple system (as the one in Fig.
\ref{fig_machine}), when the actual probability is $\frac{1}{2}$.
This may render a testing equivalence insufficient for verification
purposes.

\begin{figure}[t]\centering
\large
$\xymatrix@R=0.3cm@C=0.05cm{
    & & & s \\
    & & & \nst  \ar@{-->}[dll]_{\frac{1}{2}} \ar@{-->}[drr]^{\frac{1}{2}} \\
& \nst \ar[dl]_{h} \ar[dr]^{t} & & & & \nst  \ar[dl]_{h} \ar[dr]^{t} \\ 
          \nst \ar[d]^{p}  && \nst   & & \nst&& \nst\ar[d]_{p} \\ 
           \nst &&  & &  && \nst  \\ 
           }$
$\xymatrix@R=0.3cm@C=0.05cm{ u \\
   \nst \ar@/_/[d]_{h} \ar@/^/[d]^{t} \\ 
          \nst \ar[d]_{p} \\  
           \nst\ar[d]_{\smiley} \\ 
            \nst \\ 
           }$
           $\xymatrix@R=0.3cm@C=0.05cm{
&& s \poa u \\
     & & \nst  \ar@{-->}[dl]_{\frac{1}{2}} \ar@{-->}[drr]^{\frac{1}{2}} \\
& \nst \ar[dl]_{\tau} \ar[dr]^{\tau} & &  & \nst  \ar[dl]_{\tau} \ar[dr]^{\tau} \\ 
          \nst \ar[d]_{\tau}  && \nst   &  \nst&& \nst\ar[d]_{\tau} \\ 
           \nst\ar[d]_{\smiley} &&  & &  & \nst\ar[d]_{\smiley}  \\ 
           \nst &&  & &  & \nst  \\
           }$
         $\xymatrix@R=0.3cm@C=0.05cm{
         & & & \bar{s} \\
    & & & \nst  \ar[dll]_{h} \ar[drr]^{t} \\
& \pst \ar@{-->}[dl]_{\frac{1}{2}} \ar@{-->}[dr]^{\frac{1}{2}} & & & & \pst  \ar@{-->}[dl]_{\frac{1}{2}} \ar@{-->}[dr]^{\frac{1}{2}} \\ 
          \nst\ar[d]^{p}  && \nst  & & \nst&& \nst\ar[d]_{p} \\ 
          \nst &&  & &  && \nst   
           }$
 \caption{\small Processes $s$ and $\bar{s}$ are distinguished in  probabilistic may/must testing theory }
 \hbox{}\vspace{-1cm}
 \label{fig_machine}
\end{figure}

 Consider now  process
$\bar{s}$ in Fig. \ref{fig_machine}. To the user this graph may as
well represent the behaviour of the coin-flipping machine -- the
user cannot see whether the machine flips the coin \emph{before} or
\emph{after} making the ``head or tail'' guess. According to her,
the machine acts as specified as long as she is able to guess the
result in half of the cases. In fact, both schedulers applied to
$\bar{s} \poa u$  yield  that the probability of reporting a
$\smiley$ action is exactly $\frac{1}{2}$. Because of the last, none
of the approaches in \cite{YL92,JY02,segala_testing96,DGHM08,PNM07}
equate processes $s$ and $\bar{s}$, as, when tested with $u$, they
produce different bounds for the probabilities of reporting
$\smiley$.
 \footnote{If we ignore
the probabilities, processes $s$ and $\bar{s}$ are
testing-equivalent by \cite{DH84}.} Note that being able to equate
$s$ and $\bar{s}$ means allowing distribution of external choice
over internal probabilistic choice \cite{csp}.

Not allowing distribution of external choice over internal
probabilistic choice has an additional effect, undesirable for
compositional verification. Namely,
 if distribution of external choice over
internal probabilistic choice is not allowed, then distribution of
prefix over internal probabilistic choice  is questioned too, as
this implies congruence issues for \emph{asynchronous} or
\emph{concurrent} parallel composition \cite{csp} (where processes
synchronize on their common actions while  interleave on the other
actions). For instance, we would not be able to equate processes
$e.a. (b \oplus_{\frac{1}{2}}c)$ and $e.((a.b)
\oplus_{\frac{1}{2}}(a.c))$. (The operator ``.'' stands for
prefixing and the operator ``$\oplus$'' stands for a probabilistic
choice.) This  is because these two processes, running each
concurrently with process $e.d$, yield systems that cannot be
equated, unless we allow distribution of external choice over
internal probabilistic choice. If we are not able to relate
processes $e.a. (b \oplus_{\frac{1}{2}}c)$ and $e.((a.b)
\oplus_{\frac{1}{2}}(a.c))$, i.e. to allow distribution of prefix
over internal probabilistic choice, then for verification
 we can only rely on equivalences that inspect the internal
structure of processes, as bisimulations and simulations
\cite{spectrum1}, and, moreover, expect overestimation of
probabilities.

All together, the above discussions trigger the following question:
``In a model where the internal nondeterminism has been quantified
with probabilities \cite{LS91}, is it possible to test  process $s$
with
 test $u$ (Fig. \ref{fig_machine}) such that the result of
testing would imply that the probability of $s$ passing the test $u$
is exactly $\frac{1}{2}$?''. In this case not only  we could
preserve the information on  probability, but  we could also allow
distribution of  prefix  over  probabilistic choice without losing
compositionality.

\paragraph{Contributions} In this paper we show that the answer to
the above question is positive. The main contributions of the paper
are the following:
\begin{itemize}

\item We introduce a technique for labeling the synchronized actions
when a reactive probabilistic process is tested
(Section~\ref{sec:testing}). The labels are in form of rational
functions, whose argument names are constructed from the action
labels set. The labeling is achieved automatically when processes
synchronize, i.e. no additional manipulation on the process graphs
is needed.

\item We propose a testing semantics (Section~\ref{sec:testing}) exploiting the new labeling method,
such that the result of testing process $s$ with test $u$ in Fig.
\ref{fig_machine} is $\frac{1}{2}$, and processes $s$ and $\bar{s}$
in Fig. \ref{fig_machine} are testing-equivalent.

\item We define a probabilistic ready trace equivalence  for reactive
probabilistic processes using the Bayesian definition of probability
(Section~\ref{sec:barbed}). The definition allows a testing scenario
in the lines of \cite{spectrum1, CSV07} to be easily constructed.

\item We define an algebra of finite processes  and show that the ready trace equivalence is congruence for the standard operators (Section~\ref{sec:algebra}).

\item We show that  all operators of our algebra, including external choice, distribute over probabilistic choice, allowing us to consider the latter one as unobservable (Section~\ref{sec:algebra}).
\item We  show that the testing equivalence of Sec. \ref{sec:testing} and the ready-trace equivalence of Sec. \ref{sec:barbed} coincide
(Section~\ref{sec:relationship}).
\end{itemize}

Section \ref{sec:conclusion} ends with concluding remarks, future
work directions regarding coexistence of probabilistic and internal
choice, and related work.

 \begin{SHORTVERSION} An extended version
of this paper, containing the proofs of the main results, can be
found in the report \cite{testing_report}.\end{SHORTVERSION}

\section{Preliminaries}\label{sec:prelim}
We define some preliminary notions needed for the rest of the paper.
\paragraph{Bayesian probability} For a set $A$, $2^A$ denotes its power-set.
The following definitions are taken from~\cite{lindley}.

We consider a sample space, $\Omega$, consisting of points called
\emph{elementary events}. Selection of a particular $a \in \Omega$
is referred to as an ``$a$ has occurred''. An \emph{event} is a set
of elementary events. $A, B, C, \ldots$ range over events. An event
$A$ \emph{has occurred} iff  for some  $a \in A$ $a$ has occurred.
Let $A_1, A_2, \ldots$ be a sequence of events and $C$ be an event.
The members of the sequence are \emph{exclusive given C}, if
whenever $C$ has occurred no two of them can occur together, that
is, if $A_i \cap A_j \cap C=\emptyset$ whenever $i\not = j$. $C$ is
called a \emph{conditioning} event. If the conditioning event is
$\Omega$,
then ``given $\Omega$'' is  omitted.

 For certain pairs of events $A$ and $B$, a real number $P(A|B)$ is
 defined and called the \emph{probability} of $A$ given $B$. These
 numbers satisfy the following axioms:

 \begin{enumerate}
 \item[A1:] $0 \leq P(A|B)\leq1$ and $P(A|A)=1$.
 \item[A2:]  If the events in $\{A_i\}_{i=1}^{\infty}$ are
 exclusive given $B$, then  $P(\cup_{i=1}^{\infty}
 A_i\ |\ B)=\sum_{i=1}^{\infty}P(A_i | B).$
 \item[A3:]  $P(C|A \cap B)\cdot P(A|B)=P(A \cap C|B)$.
 \end{enumerate}
For $P(A|\Omega)$ we simply write $P(A)$.

 \paragraph{Probabilistic transition systems} In a probabilistic transition system (PTS)  there are two types of  transitions,
 viz. action and probabilistic transitions;
 a state can either perform action transitions only
(nondeterministic state) or (unobservable) probabilistic transitions
only (probabilistic state). To simplify, we assume that
probabilistic transitions lead to nondeterministic states. The
nondeterministic states exhibit only a so-called \emph{external}
(observable) nondeterminism, i.e the choice is between the actions,
but once the action is chosen, the next  state is determined. The
outgoing transitions of a probabilistic state $s$ define probability
over the power-set of the set of nondeterministic states.

We give a formal definition of a PTS. Presuppose a finite set of
actions $\actset$.

\begin{definition}[Probabilistic Transition System (PTS)]
\label{def:PTS} A \emph{PTS} is a tuple $\sys=\system$, where
\begin{itemize}
\item $\nstates$ and $\pstates$ 
are finite disjoint sets of \emph{nondeterministic} and
\emph{probabilistic states}, resp.,
\item $\atransym\subseteq
\nstates \times \actset \times \nstates \cup \pstates$ is an
\emph{action transition relation} such that $(s,a,t) \in \atransym$
and $(s,a,t') \in \atransym$ implies $t=t'$, and
\item $\ptransym \subseteq
\pstates \times (0,1] \times \nstates$ is a \emph{probabilistic
transition relation} such that, for all $s \in \pstates$,
$\sum_{(s,\pi,t)\in \ptransym}\pi =1$.

\end{itemize}
\end{definition}

\noindent We denote $\nstates \cup \pstates$ by $\states$. We write
$s \atran{a} t$ rather than $(s,a,t) \in \atransym$, and $s
\ptran{\pi} t$  rather than $(s,\pi,t)\in \ptransym$ (or $s \ptran{}
t$ if the value of $\pi$ is irrelevant in the context).  We write $s
\atran{a}$ to denote that there exists an action transition $s
\atran{a}s'$ for some $s' \in \states$. We agree that a state
without outgoing transitions
belongs to $\nstates$.


As standard, we define a \emph{process graph} (or simply
\emph{process}) to be a state $s \in \states$ together with all
states reachable from $s$, and the transitions between them. A
process graph is usually named by its \emph{root} state, in this
case $s$. 


\section{Testing equivalence} \label{sec:testing}

In this section we define a testing equivalence in the style of
\cite{DH84} for reactive probabilistic
processes. 

Recall from elementary mathematics that a division of two
polynomials is called a \emph{rational function}. For example,
$\frac{2x}{x+y}$ is a rational function with  arguments $x$ and $y$.
A possible domain for this function is $(0, \infty) \times (0,
\infty)$. We are going to exploit a subset $\ratexpr$ of the
rational functions
 whose argument names
belong to the action labels $\actset$, which is generated by the
following grammar:

\[\varphi::=\alpha  \ \mid \   a \ \mid \ \varphi + \varphi \ \mid \
\varphi \cdot \varphi \ \mid \ \frac{\varphi}{\varphi},\]

\noindent where $\alpha$ is a non-negative scalar, $a \in \actset$,
and $+, \ \cdot, \ $ and $ \frac{\cdot}{\cdot}$ are ordinary
algebraic addition, multiplication and fraction, resp. Brackets are
used in the standard way to change the priority of the operators.
For our purposes, we assume that the arguments  $a,b, ...$ can only
take positive values, i.e. the domain of every function in
$\ratexpr$ is $(0, \infty)^{n}$, where $n$ is the size of the action
set. Therefore, two rational functions in $\ratexpr$ are equal iff
they can be transformed to equal terms using the standard
transformations that preserve equivalence (e.g. for $a,b \in
\actset$, $ \frac{1}{2}\cdot \frac{a}{a+b}+\frac{1}{2}\cdot
\frac{b}{a+b}=\frac{1 \cdot (a+b)}{2 \cdot(a+b)}=\frac{1}{2}$).

As standard, a \emph{test} $T$ is a finite process
 such that,
 for a symbol $\success \not \in \actset$, there may exist
transitions $s \atran{\success} $
 for some  states $s$ of $T$. Denote the set of all tests by $\tests$. Given a process $s$ and
action $a \in \actset$, denote by $s_a$ the process (if exists) for
which $s \atran{a}s_a$. Given a PTS $\sys = \system$, let $I\colon
\nstates \mapsto 2^\actset$ be a function such that, for all $a \in
\actset, s \in \nstates$, it holds $a \in I(s)$ iff $s \atran{a}$.
$I (s)$ is called the \emph{menu} of $s$. Intuitively, for $s \in
\nstates$, $I(s)$ is the set of actions that the process  $s$ can
perform initially. Next, we define the result of testing a process
with a given test. The informal explanation follows afterwards.

\begin{definition}\label{def:Results} The function
$\Result \colon \states \times \tests \mapsto \ratexpr$ that gives
the result of testing a process $s$ with a test $T$ is defined as
follows:

\[
\Result(s,T)=
\begin{cases} {}
1, & \text{if } T \atran{\success}, \\

\sum_{i \in I}{\pi_i \cdot \Result(s_i , T)}, &  \text{if } s
\pitran{\pi_i}s_i \text{ for } i \in I$ \text{and } $T  \not \atran{\success} \\

\sum_{i \in I}{\pi_i \cdot \Result(s , T_i)}, &  \text{if } T
\pitran{\pi_i}T_i \text{ for } i \in I$ \text{and } $s  \not \ptran{} \\


\sum_{a \in K} \frac{a}{\sum_{b \in K}{b}} \cdot \Result(s_a , T_a),
& \text{for } K = I(s) \cap I(T),\text{ otherwise}.

\end{cases}
\]
\end{definition}

\noindent As usual, the result of testing a process  with a test
denoting success is one, while the result of testing a process with
a probabilistic state as a root (i.e. initially probabilistic
process) is a weighted sum of the results of testing the subsequent
processes with the same test. Similarly when the test is initially
probabilistic. The novelty is in the result of testing an initially
nondeterministic process $s$ with a test $T$ that can initially
perform actions from $\actset$ only. Namely, when the process and
the test synchronize on an action, the resulting transition is
labeled with a ``weighting factor'', containing information about
the way this synchronization happened.
 \begin{wrapfigure}{r}{39mm}
\centering \large \hbox{}\vspace{-1cm}
           $\xymatrix@R=0.3cm@C=0.07cm{
    & & & \nst  \ar@{-->}[dll]_{\frac{1}{2}} \ar@{-->}[drr]^{\frac{1}{2}} \\
& \nst \ar[dl]_{\frac{h}{h+t}} \ar[dr]^{\frac{t}{h+t}} & & & & \nst  \ar[dl]_{\frac{h}{h+t}} \ar[dr]^{\frac{t}{h+t}} \\ 
          \nst \ar[d]_{\frac{p}{p}}  && \nst   & & \nst&& \nst\ar[d]_{\frac{p}{p}} \\ 
           \nst\ar[d]_{\smiley} &&  & &  && \nst\ar[d]_{\smiley}  \\ 
           \nst &&  & &  && \nst  \\
           }$
 \caption{\small Graphical representation of the  result of testing $s$ (Fig. \ref{fig_machine})  with $u$}
\hbox{}\vspace{-1cm}
 \label{fig_testing}
\end{wrapfigure}
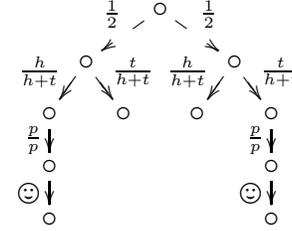
This information has form of a rational function, the numerator of
which represents the synchronized action itself, while the
denominator is the sum of the common initial actions of $s$ and $T$,
i.e., all actions on which $s$ and $T$ could have synchronized at
the current step. Then, the rational function is temporarily treated
as ``symbolic'' probability, in order to compute the final result of
the testing. The final result is again a rational function in
$\ratexpr$.

Fig. \ref{fig_testing} represents graphically the result of testing
process $s$ in Fig. \ref{fig_machine} with the test $u$ from the
same figure. It is easy to compute that the result of testing is
equal to $\frac{1}{2}$, which establishes one of our goals set in
Section \ref{sec:introduction}. However, in many cases the result is
a non-scalar rational function. For example, denote by ``$+$'' the
external choice operator. The result of applying  test $h.p.\success
+t.\success$  to each of processes $s$ and $\bar{s}$ in Fig.
\ref{fig_machine} is $\frac{h+2t}{2(h+t)}$.

\begin{definition} \label{def:testing} Two processes $s$ and $\bar{s}$ are \emph{testing
equivalent}, notation $s \testing \bar{s}$, iff $\Result(s,T)$ and
$\Result(\bar{s},T)$ are equal functions for every test $T$.
\end{definition}

\noindent Obviously, comparing two results boils down to comparing
two polynomials, after both rational functions have been transformed
to equal denominators.

\begin{example}Consider the processes in Fig. \ref{fig_notequiv}. The test
$a.\success + b.c.\success$ distinguishes between the two processes.
\end{example}
\begin{figure}[t]\centering
\large \hbox{}\vspace{-0.7cm}
     $
    \xymatrix@R=0.3cm@C=0.05cm{
    & & & \pst  \ar@{-->}[dll]_{\frac{1}{2}} \ar@{-->}[drr]^{\frac{1}{2}} \\
& \nst \ar[dl]_{a} \ar[dr]^{b} & & & & \nst  \ar[dl]_{b} \ar[dr]^{e} \\ 
          \nst  && \nst\ar[d]_{c}  & & \nst\ar[d]_{d} && \nst \\ 
           && \nst & & \nst &&   \\ 
           }
           \quad \not \testing \quad
           \xymatrix@R=0.3cm@C=0.05cm{
    & & & \pst  \ar@{-->}[dll]_{\frac{1}{2}} \ar@{-->}[drr]^{\frac{1}{2}} \\
& \nst \ar[dl]_{a} \ar[dr]^{b} & & & & \nst  \ar[dl]_{b} \ar[dr]^{e} \\ 
          \nst  && \nst\ar[d]_{d}  & & \nst\ar[d]_{c} && \nst \\ 
           && \nst & & \nst &&   \\ 
           }$
\newline
\hbox{}\vspace{-0.3cm}
 \caption{Processes $s$ (left) and $\bar{s}$ (right) are not testing equivalent}
 \hbox{}\vspace{-0.7cm}
 \label{fig_notequiv}
\end{figure}
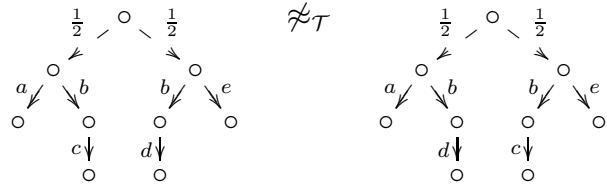
\begin{remark} Def. \ref{def:Results} assumes that, when the process and the
test are ready to synchronize on an action,  the  test can
\emph{see} which actions have been offered from the process. This
corresponds to the user (e.g. $u$ in Fig. \ref{fig_machine}) being
able to see the menu that the machine (e.g. $s$ in Fig.
\ref{fig_machine}) offers. Note that this assumption does not exist
in the standard non-probabilistic testing theory \cite{DH84}.
However, in real-life systems this is usually the case. Moreover,
this assumption is mild with respect to probabilistic may/must
testing approaches, where one needs to know the complete internal
structure of the composed process, which, on the other side, yields
unrealistic over-estimations of probabilities. In contrary, in our
case, in order to compute the function $\Result(s,T)$, it is not
necessary that the probabilistic transitions of $s$ and their labels
are known. Their effect can be inferred
 statistically, by testing $s$ with $T$ sufficiently many times. To
 simplify the presentation, we do not go into details on statistical testing.
\end{remark}

\section{Probabilistic ready trace semantics}\label{sec:barbed}

In this section we define a probabilistic version of ready trace
equivalence~\cite{Pnueli85, BBKready}.

\begin{definition}[Ready trace] A \emph{ready trace of length $n$} is a  sequence
 $\obs = (
 \menu_1, a_1, \menu_2, a_2, \ldots, \menu_{n-1}, a_{n-1}, \menu_n)$ where $\menu_i \in 2^\actset$ for all
$i \in \{1, 2, \ldots , n\}$ and $a_i \in \menu_i$ for all $i \in
\{1,2, \ldots, n-1\}$ . 
\end{definition}

\noindent We assume that  the observer has  ability to observe the
actions that the process performs, together with the menus out of
which actions are chosen. Intuitively, a ready trace $\obs = (
 \menu_1, a_1, \menu_2, a_2, \ldots, \menu_{n-1}, a_{n-1}, \menu_n)$ can be
observed if the initial menu is $\menu_1$,  then action $a_1 \in
\menu_1$ is performed, then the next menu is $\menu_2$,  then action
$a_2 \in \menu_2$ is performed and so on, until the observing ends
at a point when the  menu is $\menu_n$. It is essential that, since
the probabilistic transitions are not observable, the observer
cannot infer where exactly they happen in the ready trace.

Clearly the probability of observing a ready trace
$(\{a,b\},a,\{c\})$ is conditioned on choosing the action $a$ from
the menu $\{a,b\}$.
 This suggests that, when
defining probabilities on ready traces, the Bayesian definition of
probability is more appropriate than the measure-theoretic
definition that is usually taken.

Next, given a  process $s$, we define a process $s_{(\menu, a)}$.
Intuitively, $s_{(\menu, a)}$ is  the process that $s$ becomes,
assuming that menu $\menu$ was offered to $s$ and action $a$ was
performed.

\begin{definition}\label{def:pafter}
Let $s$ be a process graph. Let $\menu \subseteq \actset$, $a \in
\menu$ be such that $I(s)=\menu$ if $s \in \nstates$ or otherwise
there exists a transition $s \ptran{}s'$ such that $I(s')=\menu$.
The process graph $s_{(\menu,a)}$ is obtained from $s$ in the
following way:
\begin{itemize} \item if $s \in \nstates$ then the root of
$s_{(\menu,a)}$ is the state $s'$ such that $s \atran{a} s'$, and
\item if $s \in \pstates$ then a new state $s_{(\menu, a)}$ is
created. Let $\pi = $
$\sum_{s\ptran{\pi_i}s_i,I(s_i)=\menu}{\pi_i}$. For all $s_i'$ such
that $s \ptran{\pi_i}s_i\atran{a}s_i'$ and $I(s_i)=\menu$:

\begin{itemize} \item if $s_i' \not \ptran{}$, then an edge
$s_{(\menu,a)}\ptran{\pi_i/\pi}s_i'$  is created;
\item for all transitions $s_i' \ptran{\rho_i}s_i''$, an edge
$s_{(\menu,a)}\ptran{\pi_i \rho_i/\pi}s_i''$  is created.
\end{itemize}
\end{itemize}
\end{definition}

\begin{example} Consider processes  $s$ and $s_{(\{a,b\},a)}$ in Fig. \ref{fig_pafter}.
Assuming that the initial menu of $s$ was $\{a,b\}$ and action $a$
was performed,   process  $s_{(\{a,b\},a)}$  describes the further
behaviour of $s$: with  probability
$\frac{1}{8}/(\frac{1}{8}+\frac{3}{8}) = \frac{1}{4}$ action
 $c$ is performed, while with  probability
$\frac{3}{8}/(\frac{1}{8}+\frac{3}{8}) = \frac{3}{4}$
action  $d$ is performed.
\end{example}

\begin{figure}[t]\centering
\large
       $ \xymatrix@R=0.3cm@C=0.1cm{
          & & & & & & \pst \ar@{-->}[dlllll]_{\frac{1}{8}} \ar@{-->}[dll]^{\frac{3}{8}} \ar@{-->}[drr]_{\frac{1}{4}} \ar@{-->}[drrrrr]^{\frac{1}{4}} & & & & & &\\
          & \nst \ar[dl]_{a} \ar[dr]^{b} & & & \nst \ar[dl]_{a} \ar[dr]^{b} & & & & \nst \ar[dl]_{a} \ar[dr]^{c} &&&  \nst \ar[dl]_{a} \ar[dr]^{e} & \\
          \nst\ar[d]_{c}  && \nst\ar[d]_{e} & \nst\ar[d]_{d} && \nst\ar[d]_{f} && \nst\ar[d]_{d} && \nst\ar[d]_{e} & \nst\ar[d]_{d} && \nst\ar[d]_{f} \\
          \nst && \nst & \nst && \nst && \nst &&  \nst & \nst && \nst
        }$
        \quad \quad
$\xymatrix@R=0.3cm@C=0.1cm{
& \nst \ar@{-->}[dl]_{\frac{1}{4}} \ar@{-->}[dr]^{\frac{3}{4}} \\ 
          \nst\ar[d]_{c}  && \nst\ar[d]_{d}  \\
          \nst  && \nst
           }
           $
\newline
 \caption{Example of a process $s$ (left) and $s_{(\{a,b\},a)}$
 (right).}
 \hbox{}\vspace{-0.5cm}
 \label{fig_pafter}
\end{figure}
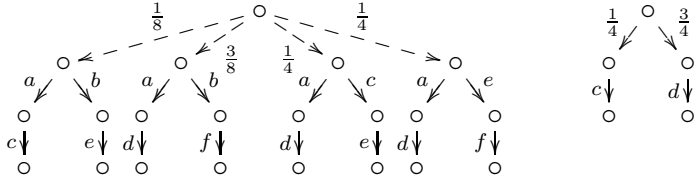


\begin{definition} \label{def:P}Let $(
 \menu_1, a_1, \menu_2, a_2, \ldots, \menu_{n-1}, a_{n-1}, \menu_n)$
 be a ready trace of length $n$ and $s$ be a process graph.
 Functions $P_s^1(\menu )$  and
 $\prob_s^n(\menu_n | \menu_1, a_1, \ldots \menu_{n-1}, a_{n-1})$ (for $n > 1$)
 are defined in the following way:\\

\noindent$ \prob_s^1(\menu )=
\begin{cases}{}
\sum_{s \ptran{\pi}s'}{\pi \cdot P_{s'}^1(
\menu )} & \textrm{if } s \in \pstates,\\
1 & \textrm{if } s \in \nstates, \ I(s)=\menu, \\
0 & \textrm{otherwise. } \\
\end{cases}$\\\\
\noindent$\prob_s^2(\menu_2 | \menu_1, a_1) =
\begin{cases}{}
\prob_{s_{(\menu_1,a_1)}}^1(\menu_2) & \textrm{if} \ \prob_s^1(\menu_1)>0, \\
\textrm{undefined} & {otherwise}.
\end{cases}
$\\\\
\noindent$\prob_s^n(\menu_n | \menu_1, a_1, \ldots, a_{n-1})=
\begin{cases}{}
\prob_{s_{(\menu_1,a_1)}}^{n-1}(\menu_n | \menu_2, a_2, \ldots,
 a_{n-1}) & \textrm{if} \ \prob_s^1(\menu_1)>0, \\
\textrm{undefined} & {otherwise}.
\end{cases}
$
\end{definition}

Let the sample space consist of all possible menus and $s \in
\states$. Function $P_s^1(\menu)$ can be interpreted as the
probability that the menu $\menu$ is observed initially when process
$s$ starts executing.  Let the sample space consist of all ready
traces of length $n$ and let $s \in \states$. The function
$P_s^n(\menu_n | \menu_1, a_1, \ldots \menu_{n-1}, a_{n-1})$ can be
interpreted as the probability of the event $ \{(\menu_1, a_1,
\ldots, \menu_{n-1}, a_{n-1}, \menu_n) \} $, given the event $
\{(\menu_1, a_1, \ldots \menu_{n-1}, a_{n-1}, \xmenu) \textrm{~:~}
\xmenu \in 2^{\actset} \}$, if observing ready traces of process
$s$. It can be checked that these probabilities are well defined,
i.e., they satisfy the axioms A1-A3 of Section \ref{sec:prelim}.

\begin{definition}[Probabilistic ready trace
equivalence]\label{def:barbed} Two processes $s$ and $\bar{s}$ are
\emph{probabilistically ready trace equivalent}, notation $s \barbed
\bar{s}$, iff:
\begin{itemize}
\item for all $\menu$ in $2^\actset$, $P_s^1(\menu)=P_{\bar{s}}^1(\menu)$ and
\item for all $n > 1$, $P_s^n(\menu_n | \menu_1, a_1, \ldots \menu_{n-1},
a_{n-1})$ is defined if and only if $P_{\bar{s}}^n(\menu_n |
\menu_1, a_1, \ldots \menu_{n-1}, a_{n-1})$ is defined, and in  that
case $P_s^n(\menu_n | \menu_1, a_1, \ldots \menu_{n-1}, a_{n-1})$
$=$ $ P_{\bar{s}}^n(\menu_n | \menu_1, a_1, \ldots \menu_{n-1},
a_{n-1})$.
\end{itemize}
\end{definition}

\noindent Informally, two processes $s$ and $\bar{s}$ are
ready-trace equivalent iff for every $n$ and every ready trace
$(\menu_1, a_1, \menu_2, a_2 , \ldots \menu_n )$ , the probabilitity
to observe $\menu_n$, under condition that previously the sequence
$(\menu_1, a_1, \menu_2, a_2 , \ldots a_{n-1} )$ was observed, is
defined at the same time for both $s$ and $\bar{s}$; moreover, in
case both probabilities are defined, they coincide. Note that it is
straightforward to construct a black-box testing scenario
\cite{spectrum1, CSV07} for this ready-trace equivalence.

\begin{example} Processes $s$ and $\bar{s}$ in Fig. \ref{fig_machine} are  ready-trace equivalent. Processes in Fig. \ref{fig_notequiv} are not
ready-trace equivalent: for process $s$ it holds $\prob_s^2(\{c\} |
\{a,b\}, b)=\frac{1}{2}$, while for  process $\bar{s}$ it holds
$\prob_t^2(\{c\} | \{a,b\}, b)=0$.
\end{example}

\section{Algebra}\label{sec:algebra}


In this section we define an algebra $\algebra$ of finite processes
using $\barbed$ as an underlying equivalence. The purpose is to show
that $\barbed$ is congruence for the standard operators on the model
of reactive probabilistic processes and that all operators
distribute through probabilistic choice, as all operators distribute
through  internal choice in standard CSP~\cite{csp}. As discussed in
Sec. \ref{sec:introduction}, we do not use hiding operator. For more
discussions on including internal nondeterminism in general, please
see Sec. \ref{sec:conclusion}.

The set of $\algebra$ processes $\procset$ is generated by the
following grammar:
\[\begin{array}{lcl}
 \procset ::=\ \deadlock \ \mid\
          \extch_{i \in I}{a_i}.\procset_i \mid\
          \pop_{i \in I}{\pi_i}\procset_i \mid\
          \prio \procset \mid\
          \procset\poa \procset \mid\
          \procset\po \procset

\enskip
\end{array}\] where $\deadlock \not \in  \actset$ is a new symbol,   $\{a_i\}_{i \in I} \subseteq \actset$,
$a_i \not = a_j$ for $i,j \in I, \ i \not = j$,  $\pi_i \in (0,1]$,
$\sum_{i \in I}{\pi_i}=1$, and $L \subseteq \actset$ is the set of
actions that appear both in the left and in the right process  of
the expression
$\procset\po \procset$. 

Let $p,q,r,...$  range over $\algebra$ processes. The constant
$\deadlock$ stands for the \emph{empty} process. The process
$\aproc.p$ performs the action $a$ and continues as process $p$ (we
write $a$ rather than $a.\deadlock$). The \emph{external choice}
$\extch_{i \in I}{a_i}.p_i$ stands for a choice among the actions
$\{a_i\}_{i \in I}$ and proceeds as process $p_j$ if action $a_j$ is
chosen and executed. The \emph{probabilistic choice} $\pop_{i \in
I}{\pi_i}p_i$ behaves as $p_i$ with probability $\pi_i$ for $i \in
I$. The \emph{priority} operator $\prio$  assumes a partial order
$>$ on $\actset$. For actions $a$ and $b$,
 we say $a$ has higher priority than $b$ iff $a > b$.  $\prio$ forces the process  to always perform the action with the highest
priority in the current menu. In a \emph{synchronized parallel
composition} $p \poa q$, the processes operate in a lock-step
synchronization. In a \emph{parallel composition} $p \po q$, the
processes synchronize on their common actions, while the other
actions are interleaved. \footnote{To preserve associativity of
$\po$, we  require that for any processes $p,q$, and $r$, if $p$ and
$q$ share actions and $q$ and $r$ share actions then $p$ and $r$ do
not share actions.} \footnote{Sequential composition and successful
termination can be also defined, which we avoid here to shorten.}

Table~\ref{table:procsem} represents the operational semantics of
$\algebra$ processes (we omit the symmetric rules for $\po$ and
$\poa$).

\begin{table}[t!]
 \center\framebox{{\small
\begin{tabular}{c}


\renewcommand{\thesosrule}{\sosaltact}
    $\sosrule{}{\extch_{i \in I}{a_i}.p_i\exec{a_i}p_i}$

       $\renewcommand{\thesosrule}{\sosencact}
    \sosrule{p\exec{a} p',\ q\exec{a}{q'}}
            {p\poa q\exec{a} p'\poa q'}$

  $ \renewcommand{\thesosrule}{\soshideterm}
    \sosrule{p \pexec{\pi} p',\ q \pexec{\rho} q'}
            {p\poa q \pexec{\pi \rho} p' \poa q'}$


  \\\\

  $ \renewcommand{\thesosrule}{\sosencterm}
    \sosrule{p \pexec{\pi} p',\ q \npexec}
            {p\poa q \pexec{\pi} p' \poa q}$
%
%
%


    $\renewcommand{\thesosrule}{\sospchoiceone}
    \sosrule{ p_k\npexec, k \in I}
            {\pop_{i \in I}{\pi_i p_i}\ptran{\pi_k}p_k}$

            $\renewcommand{\thesosrule}{\sospchoicetwo}
    \sosrule{ p_k\pexec{\rho_k}p_k', k \in I}
            {\pop_{i \in I}{\pi_i p_i}\ptran{\pi_k\rho_k}p_k'}$
\\\\

    $\renewcommand{\thesosrule}{\sosparactone}
    \sosrule{a \not \in L, \ p\exec{a}\torp{p'},\ q\npexec}{p\po q\exec{a} {p'\po q}}$

       $\renewcommand{\thesosrule}{\sosparacttwo}
    \sosrule{a \in L, \ p\exec{a} p',\ q\exec{a}{q'}}
            {p\po q\exec{a} p'\po q'}$

$ \renewcommand{\thesosrule}{\sosparprobone}
    \sosrule{p \pexec{\pi} p',\ q \npexec}
            {p\po q \pexec{\pi} p' \po q}$

\\\\
  $ \renewcommand{\thesosrule}{\sosparprobtwo}
    \sosrule{p \pexec{\pi} p',\ q \pexec{\rho} q'}
            {p\po q \pexec{\pi \rho} p' \po q'}$

$ \renewcommand{\thesosrule}{\sospriority}
    \sosrule{p \exec{a} p',\ p \not \exec{b} \textrm{ for } a < b}
            {\prio p \exec{a} \ \prio p'}$

$ \renewcommand{\thesosrule}{\sosrepact}
    \sosrule{p \ptran{\pi} p'}
            {\prio p \ptran{\pi} \ \prio p'}$\\\\

%
%

\end{tabular} } } \caption{Operational semantics for $\algebra$ processes}\label{table:procsem}
\end{table}
As usual, a context is a $\algebra$ process  with a hole in it.
Given a context $C[\cdot]$ and a process $p$, we write $C[p]$ to
denote the process obtained by filling in the hole of $C[\cdot]$
with $p$. \begin{SHORTVERSION}The proofs of the following theorems
are given in \cite{testing_report}.\end{SHORTVERSION}

\begin{theorem}[Congruence]\label{thm:congruence} The equivalence $\barbed$ is congruence for
the operators of $\algebra$, i.e., if $p \barbed \pbar$ then for
each context $C[\cdot]$, it holds that $C[p] \barbed C[\pbar]$.
\end{theorem}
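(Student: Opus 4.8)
The plan is to reduce the theorem to a one-layer closure property of each operator and then handle arbitrary contexts by induction on their structure. Writing a one-hole context as $C[\cdot]=\mathit{op}(r_1,\ldots,C'[\cdot],\ldots,r_k)$, the base case $C[\cdot]=[\cdot]$ is trivial, and the inductive step reduces, via the induction hypothesis $C'[p]\barbed C'[\pbar]$, to showing that every operator $\mathit{op}$ of $\algebra$ is compatible with $\barbed$ in the argument carrying the hole: if $p\barbed\pbar$ and the remaining arguments are fixed, then $\mathit{op}(\ldots,p,\ldots)\barbed\mathit{op}(\ldots,\pbar,\ldots)$. Since $\barbed$ is defined entirely through the functions $P^1$ and $P^n$ of Definition~\ref{def:P}, it suffices to prove, for each operator, that these functions of the composite are determined by the corresponding functions of the arguments; equality of the latter (which $p\barbed\pbar$ provides) then forces equality of the former, including agreement of their domains of definedness.

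The first ingredient is a set of decomposition formulas for $P^1$, all of which are direct. The menu of $\extch_{i}a_i.p_i$ is the fixed set $\{a_i\}_i$, so $P^1$ is independent of the $p_i$; for probabilistic choice $P^1_{\pop_{i}\pi_i p_i}(\menu)=\sum_i\pi_i P^1_{p_i}(\menu)$; for priority $P^1_{\prio p}(\menu')=\sum_{\menu:\,\max_{>}(\menu)=\menu'}P^1_p(\menu)$, where $\max_{>}(\menu)$ is the set of $>$-maximal actions of $\menu$; and for both parallel operators $P^1_{p\poa q}$ and $P^1_{p\po q}$ are obtained from $P^1_p$ and $P^1_q$ by summing the products $P^1_p(\menu_p)\,P^1_q(\menu_q)$ over all menu pairs $(\menu_p,\menu_q)$ whose combined menu equals the observed one — the combined menu being $\menu_p\cap\menu_q$ for $\poa$ and $(\menu_p\cap\menu_q\cap L)\cup(\menu_p\setminus L)\cup(\menu_q\setminus L)$ for $\po$. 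Each formula depends on the arguments only through their $P^1$, using that the probabilistic choices of the two operands are independent.

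For the conditional functions I would argue by induction on the ready-trace length $n$, using the recursion $P^n_s(\menu_n\mid\menu_1,a_1,\ldots)=P^{n-1}_{s_{(\menu_1,a_1)}}(\menu_n\mid\menu_2,a_2,\ldots)$. The key technical step is a commutation lemma for the after-operator $(\cdot)_{(\menu,a)}$ of Definition~\ref{def:pafter}: for each operator, $(\mathit{op}(\ldots))_{(\menu,a)}$ is $\barbed$-equivalent to a probabilistic choice, ranging over all tuples of operand menus consistent with $(\menu,a)$, of $\mathit{op}$ applied to the conditioned operands $p_{(\menu_p,a)}$ (and $q_{(\menu_q,a)}$), where each tuple carries the Bayesian posterior weight of that menu-tuple given the observed menu $\menu$. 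For prefix this collapses to $(\extch_{i}a_i.p_i)_{(\{a_i\}_i,a_j)}=p_j$, and for probabilistic choice it is precisely the renormalisation prescribed by Definition~\ref{def:pafter}, the posterior weight of branch $i$ being $\pi_i P^1_{p_i}(\menu)/\sum_j\pi_j P^1_{p_j}(\menu)$. Granting the lemma, the inductive step closes at once: the posterior weights are built solely from the operands' $P^1$, while the conditioned operands have their $P^{n-1}$ determined by the operands' $P$-functions through the induction hypothesis, so $P^n$ of the composite is determined as well.

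I expect the parallel operators to be the main obstacle. There the after-construction must be shown to distribute over a composition whose observed menu $\menu$ is, in general, realised by many distinct operand-menu pairs $(\menu_p,\menu_q)$, and for $\po$ one must additionally distinguish whether the performed action was synchronised ($a\in L$, so both operands advance) or local ($a\notin L$, so a single operand advances while the other is carried along unchanged). The delicate point is to verify that the joint Bayesian conditioning of the product process factorises into the posterior-weighted mixture of the products of the separately conditioned operands — equivalently, that Definition~\ref{def:pafter} applied to $p\poa q$ (resp.\ $p\po q$) coincides up to $\barbed$ with that mixture. This rests essentially on the independence of the two operands' probabilistic transitions and on the fact that $\barbed$ records only the combined menu, never its decomposition. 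The priority operator exhibits a milder form of the same difficulty, since several operand menus $\menu$ collapse to a single reduced menu $\max_{>}(\menu)$ and must be aggregated with the correct posterior weights.
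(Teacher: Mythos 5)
Your proposal is correct and follows essentially the same route as the paper's proof: the paper also proceeds by induction on the ready-trace length, decomposes $P^1$ of the composite over operand-menu pairs combined via $\menu_p\otimes\menu_q$, and its central computation is exactly your ``commutation lemma'' --- showing $(p \po q)_{(\menu_1,a)}$ is equivalent to the Bayesian-posterior-weighted probabilistic mixture of $p_{(\menu_p,a)}\po q_{(\menu_q,a)}$ over consistent menu pairs, with the synchronised ($a\in L$) and local ($a\notin L$) cases treated separately, before closing the induction via congruence of $\barbed$ for $\pop$. The only difference is one of scope, not of method: the paper writes out the details only for $\po$ (as the hardest case), whereas you sketch the analogous, easier verifications for the other operators as well.
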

\begin{LONGVERSION}
\begin{proof}
We prove the congruence result for parallel composition, because
this is the most complicated case. We prove that if $p \barbed
\pbar$ then $p \po q \barbed \pbar \po q$.  Denote by $L$ the set of
the common actions for $p$ and $q$ (and therefore $\pbar$ and $q$).
Without loss of generality, assume that $p,\pbar, $ and $q$ are
probabilistic processes. For arbitrary menus $\menu', \menu''$,
denote by $\menu' \otimes \menu''$ the menu $(\menu' \cup
\menu'')\setminus
(L \setminus (\menu' \cap \menu''))$.\\

By induction on $n$, we prove that if $p \barbed \pbar$ then
$\prob_{(p \po  q)}^{n} (\menu_n | \menu_1, a_1, \ldots
\menu_{n-1},a_{n-1})$ $=$ $\prob_{(\pbar \po q)}^{n} (\menu_n |
\menu_1, a_1, \ldots \menu_{n-1},a_{n-1})$.

For arbitrary menus $\menu_p$ and  $\menu_q$, we have
$\prob_p^1(\menu_p)=\prob_{\pbar}^1(\menu_p)$. Let $\menu$ be a menu
such that $\prob_{p \po q}^1(\menu)>0$. This means that there exist
menus $\menu_p, \menu_q$ such that $\prob_p^1(\menu_p)>0$,
$\prob_q^1(\menu_q)>0$, and $\menu=\menu_p \otimes \menu_q$ (by
Table \ref{table:procsem}). We have,
\begin{align*}
&\prob_{p \po q}^1(\menu)=  \sum_{ \begin{scriptsize}
\begin{array}{c}
                                     p \po q \ptran{\lambda_k}r_k, \\
                                    I(r_k)=\menu
                                   \end{array}\end{scriptsize}
}\lambda_k =\sum_{ \begin{scriptsize}\begin{array}{c}
                      p \ptran{\pi_i} p_i,q \ptran{\rho_j}q_j, \\
                     I(p_i)\otimes I(q_j)=\menu
                   \end{array}\end{scriptsize}
 }{\pi_i \cdot \rho_j}\\
=&\sum_{q\ptran{\rho_j}q_j}{\rho_j\sum_{
p\ptran{\pi_i}p_i,\menu=I(p_i)\otimes I(q_j)}{\pi_i}}
=\sum_{q\ptran{\rho_j}q_j}{\rho_j\sum_{\pbar\ptran{\bar{\pi}_i}\bar{p}_i,\menu=I(\bar{p}_i)\otimes
I(q_j)}{\bar{\pi}_i}} = \prob_{\pbar \po q}^1(\menu).
\end{align*}
Suppose $\prob_{(p \po  q)}^{k} (\menu_k | \menu_1, a_1, \ldots
\menu_{k-1},a_{k-1})$$=$$\prob_{(\pbar \po q)}^{k} (\menu_k |
\menu_1, a_1, \ldots \menu_{k-1},a_{k-1})$ if $p \barbed \pbar$ and
$k<n$.

\noindent \textbf{Case 1} Suppose first that both $\prob_{(p \po
q)}^{n} (\menu_n | \menu_1, a_1, \ldots \menu_{n-1},a_{n-1})$ and
$\prob_{(\pbar \po q)}^{n} (\menu_n | \menu_1, a_1, \ldots
\menu_{n-1},a_{n-1})$ are defined.
 Because of Def. \ref{def:P} and the inductive assumption, it is enough to prove that $\prob_{(p \po
q)_{(\menu_1,a_1)}}^{n-1} (\menu_n | \menu_2, a_2, \ldots
\menu_{n-1},a_{n-1})$ = $\prob_{(\pbar \po q)_{(\menu_1,a_1)}}^{n-1}
(\menu_n | \menu_2, a_2, \ldots \menu_{n-1},a_{n-1})$. Because of
the inductive assumption, to prove the last, it is enough to prove
that  $(p
\po q)_{(\menu_1,a_1)} \barbed (\pbar \po q)_{(\menu_1,a_1)}$ .\\

\noindent \textbf{Case 1.1} \ $a_1=a \in L$.\\\\

\noindent Denote $\sum_{p \ptran{\pi_i}p_i,q\ptran{\rho_j}q_j,
I(p_i)\otimes I(q_j) = \menu_1} \pi_i \rho_j$ by $\alpha$. By Def.
\ref{def:pafter} and the rules in Table \ref{table:procsem}, we have
\begin{equation} \label{eq:cong1}
(p \po q)_{(\menu_1,a)} \equiv
 \pop_{\begin{scriptsize}\begin{array}{c}
          p \ptran{\pi_i}p_i,q\ptran{\rho_j}q_j, \\
         I(p_i)\otimes I(q_j) = \menu_1
       \end{array}\end{scriptsize}
}{\frac{\pi_i\rho_j}{\alpha} \Bigl(p_{i_{(I(p_i),a)}}\po
q_{j_{(I(q_j),a)}}}\Bigr).
\end{equation}
\noindent On the other hand, denoting $\sum_{\menu_1=\menu_p\otimes
\menu_q}{\prob_p(\menu_p) \prob_q(\menu_q)}$ by $\beta$, we have
\begin{align}\label{eq:cong2}
 &\  \pop_{
\menu_1=\menu_p\otimes \menu_q}
\frac{\prob_p(\menu_p)\prob_q(\menu_q)}{\beta
}{\Bigl(p_{(\menu_p,a)}\po q_{(\menu_q,a)}\Bigr)} \notag \\
\equiv & \biggl(\pop_{\menu_1=\menu_p\otimes \menu_q}
\frac{\prob_p(\menu_p)\prob_q(\menu_q)}{\beta}\biggr)\times \notag \\
& \ \ \ \ \times {\biggl(\pop_{p
\ptran{\pi_i}p_i,I(p_i)=\menu_p}\frac{\pi_i}{\prob_p(\menu_p)}
p_{i_{(\menu_p,a)}}\biggr)\po \biggr(\pop_{q
\ptran{\rho_j}q_j,I(q_j)=\menu_q}\frac{\rho_j}{\prob_q(\menu_q)}
q_{j_{(\menu_q,a)}}\biggr)} \notag \\
\equiv & \biggl(\pop_{ \menu_1=\menu_p\otimes \menu_q}
\frac{\prob_p(\menu_p)\prob_q(\menu_q)}{\beta
}\biggr)\times \notag \\
& \ \ \ \ \ \ \ \ \ \ \ \ \ \ \ \ \ \ \ \times
{\biggl(\pop_{\begin{scriptsize}\begin{array}{c}
                                 p \ptran{\pi_i}p_i,I(p_i)=\menu_p, \\
                                 q \ptran{\rho_j}q_j,I(q_j)=\menu_q
                               \end{array}\end{scriptsize}
}\frac{\pi_i \rho_j}{\prob_p(\menu_p)\prob_q(\menu_q)}
\Bigl(p_{i_{(\menu_p,a)}}\po
q_{j_{(\menu_q,a)}}\Bigr)\biggr)} \notag \\
\equiv&\pop_{\begin{scriptsize}\begin{array}{c}
               p \ptran{\pi_i}p_j,q\ptran{\rho_j}q_j, \\
               \menu_1=I(p_i)\otimes
I(q_j)
             \end{array}\end{scriptsize}
 }{\frac{\pi_i\rho_j}{\alpha} \Bigl(p_{i_{(I(p_i),a)}}\po
q_{j_{(I(q_j),a)}}}\Bigl).
\end{align}\\

From \eqref{eq:cong1} and \eqref{eq:cong2} we have

\begin{equation}\label{eq:cong3}
(p \po q)_{(\menu_1,a)} \equiv \pop_{\menu_p,\menu_q:
\menu_1=\menu_p\otimes \menu_q}
\frac{\prob_p(\menu_p)\prob_q(\menu_q)}{\sum_{\menu_p,\menu_q}{\prob_p(\menu_p)
\prob_q(\menu_q)}}{\Bigl(p_{(\menu_p,a)}\po q_{(\menu_q,a)}\Bigr)}.
\end{equation}

Similarly,
 \begin{equation}\label{eq:cong4}(\pbar \po q)_{(\menu_1,a)} \equiv
\pop_{\menu_p,\menu_q: \menu_1=\menu_p\otimes \menu_q}
\frac{\prob_{\pbar}(\menu_p)\prob_q(\menu_q)}{\sum_{\menu_p,\menu_q}{\prob_{\pbar}(\menu_p)
\prob_q(\menu_q)}}{\Bigl(\pbar_{(\menu_p,a)}\po
q_{(\menu_q,a)}\Bigr)}.
\end{equation}

From the inductive assumption and because $p \barbed \pbar$ and
$\barbed$ is congruence for $\pop$, we have

\begin{align}\label{eq:cong5}
&\pop_{\menu_p,\menu_q: \menu_1=\menu_p\otimes \menu_q}
\frac{\prob_p(\menu_p)\prob_q(\menu_q)}{\sum_{\menu_p,\menu_q}{\prob_p(\menu_p)
\prob_q(\menu_q)}}{\Bigl(p_{(\menu_p,a)}\po q_{(\menu_q,a)}\Bigr)} \notag \\
\equiv & \pop_{\menu_p,\menu_q: \menu_1=\menu_p\otimes \menu_q}
\frac{\prob_{\pbar}(\menu_p)\prob_q(\menu_q)}{\sum_{\menu_p,\menu_q}{\prob_{\pbar}(\menu_p)
\prob_q(\menu_q)}}{\Bigl(\pbar_{(\menu_p,a)}\po
q_{(\menu_q,a)}\Bigr)}.
\end{align}

 From \eqref{eq:cong3}, \eqref{eq:cong4}, and \eqref{eq:cong5}
 it follows that $(p \po q)_{(\menu_1,a)} \equiv$ $(\pbar
\po q)_{(\menu_1,a)} $.\\

\noindent \textbf{Case 1.2} \ $a_1 \not \in L$, $a_1$ appears in
$p$. The proof is similar to Case 1, with the difference that
instead of a process $q_{(\menu_q,a_1)}$, we use a process
$q_{(\menu_q)}$. The last one is defined by a process graph obtained
in a similar way as $q_{(\menu_q,a_1)}$, with the exception that
$q_{(\menu_q)}$ is
``ready'' to choose any action from the menu $\menu_q$.\\

\noindent \textbf{Case 1.3} \ $a_1 \not \in L$, $a_1$ appears in $q$
- symmetric to Case
2.\\

\noindent \textbf{Case 2} Suppose now that $\prob_{(p \po  q)}^{k}
(\menu_k | \menu_1, a_1, \ldots \menu_{k-1},a_{k-1})$ is defined but
$\prob_{(\pbar \po q)}^{k} (\menu_k | \menu_1, a_1, \ldots
\menu_{k-1},a_{k-1})$ is not defined. Either $\prob_{(p \po
q)}(\menu_1)>0$ while $\prob_{(\pbar \po q)}(\menu_1)=0$, which is
not possible because $p \barbed \pbar$, or $\prob_{(p \po
q)_{(\menu_1,a_1)}}^{k-1} (\menu_k | \menu_2, a_2, \ldots
\menu_{k-1},a_{k-1})$ is defined but $\prob_{(\pbar \po
q)_{(\menu_1,a_1)}}^{k-1} (\menu_k | \menu_2, a_2, \ldots
\menu_{k-1},a_{k-1})$ is not defined, which again is not possible
because of the inductive assumption.
\end{proof}
\end{LONGVERSION}

%
%
The following two theorems formulate the laws of distributivity of
the operators over probabilistic choice.

\begin{theorem}\label{thm:external} For processes $\{x_{ij}\}_{i \in I, j \in J}$ and actions $\{a_{i}\}_{i \in I} \subseteq \actset$, it holds $\extch_{i \in I}a_i.\pop_{j \in J}\pi_j x_{ij}\barbed \pop_{j \in
J}\pi_j\extch_{i \in I}a_i. x_{ij}$.

\end{theorem}
\begin{LONGVERSION}
\begin{proof}
Let $\menu=\{a_i\}_{i \in I}$, $p \equiv \extch_{i \in I}a_i.\pop_{j
\in J}\pi_j x_{ij}$ and $\pbar \equiv \pop_{j \in J}\pi_j\extch_{i
\in I}a_i. x_{ij}$. Then,  it is easy to show that, for every $i \in
I$, $p_{(\menu, a_i)} \barbed \pbar_{(\menu, a_i)}$. Let $n
> 1$ and  $(\menu_1,b_1, \ldots \menu_n)$ be an observation.  Then,

\[\prob_p^n(\menu_n|\menu_1,b_1, \ldots,  b_{n-1})=
\begin{cases}{}
 \prob_{p_{(\menu_1,b_1)}}^{n-1}(\menu_n|\menu_2,b_2, \ldots
 b_{n-1}) & \textrm{if } \menu_1=\menu , b_1 \in \menu
\\
\textrm{undefined} & \textrm{otherwise},
\end{cases}
\]
and
\[\prob_{\pbar} ^n(\menu_n|\menu_1,b_1, \ldots,  b_{n-1})=
\begin{cases}{}
 \prob_{{\pbar}_{(\menu_1,b_1)}}^{n-1}(\menu_n|\menu_2,b_2, \ldots
 b_{n-1}) & \textrm{if } \menu_1=\menu , b_1 \in \menu
\\
\textrm{undefined} & \textrm{otherwise}.
\end{cases}
\]

Now, it easily follows that  
$p \barbed \pbar$.
\end{proof}
\end{LONGVERSION}


\begin{theorem}\label{thm:distr} For every context $C[\cdot]$, it holds $C[\pop_{i \in I}\pi_i x_i]\barbed \pop_{i \in I}\pi_i C[ x_i]
$.
\end{theorem}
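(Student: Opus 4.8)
The plan is to proceed by structural induction on the context $C[\cdot]$, using the congruence result (Theorem~\ref{thm:congruence}) to reduce the general statement to a finite list of \emph{one-operator} distributivity laws, one for each operator of $\algebra$. Throughout, write $\pop_{i}\pi_i x_i$ for the probabilistic choice being pushed through. The base case is the empty context $C[\cdot]=[\cdot]$, for which both sides equal $\pop_i \pi_i x_i$ and the claim holds by reflexivity of $\barbed$.

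For the inductive step I would decompose $C[\cdot]$ as $E[D[\cdot]]$, where $E[\cdot]$ is the outermost operator applied to the subcontext $D[\cdot]$ in one of its argument positions (the remaining arguments being fixed $\algebra$ processes). By the induction hypothesis, $D[\pop_i \pi_i x_i]\barbed \pop_i \pi_i D[x_i]$, so Theorem~\ref{thm:congruence} gives $C[\pop_i \pi_i x_i]=E[D[\pop_i \pi_i x_i]]\barbed E[\pop_i \pi_i D[x_i]]$. Setting $z_i := D[x_i]$, it therefore suffices to establish, for each operator and each argument position, the \emph{basic law} $E[\pop_i \pi_i z_i]\barbed \pop_i \pi_i E[z_i]$ for arbitrary families $\{z_i\}$; transitivity of $\barbed$ together with $E[z_i]=C[x_i]$ then closes the step. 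For external choice (and, as the one-summand case, for a single prefix) this basic law is an instance of Theorem~\ref{thm:external}, where the summands not containing the hole carry a probabilistic choice that is constant across its branches, for which one uses the auxiliary law $\pop_i \pi_i p\barbed p$. It thus remains to prove the basic law for the priority operator $\prio$, for the two parallel operators $\poa$ and $\po$, and for probabilistic choice $\pop$ itself (the last amounting to a flattening/associativity law for nested probabilistic choice).

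Each remaining basic law I would verify against Definition~\ref{def:barbed} by computing the functions $\prob^1$ and $\prob^n$ of Definition~\ref{def:P} on both sides. The common mechanism is that the operational rules of Table~\ref{table:procsem} let an initial probabilistic transition pass transparently through each operator: for $\prio$ the probabilistic rule simply prefixes $\prio$ to every branch; for $\poa$ and $\po$ the probabilistic rules either carry a single branch through or multiply the two branch-probabilities, exactly as the probabilistic rules for $\pop$ flatten a nested choice by multiplying $\pi_i$ with the inner weights. Consequently the top-level menu distribution $\prob^1$ computed for $E[\pop_i \pi_i z_i]$ coincides term-by-term with that for $\pop_i \pi_i E[z_i]$, and for $n>1$ one checks, using Definition~\ref{def:pafter}, that the ``after'' process $\bigl(E[\pop_i \pi_i z_i]\bigr)_{(\menu_1,a_1)}$ is $\barbed$-equivalent to $\pop_i \pi_i\bigl(E[z_i]\bigr)_{(\menu_1,a_1)}$, whence the conditional probabilities agree by a nested induction on $n$.

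The main obstacle is the basic law for the interleaving parallel operator $\po$, for exactly the reasons that make parallel composition the hard case of Theorem~\ref{thm:congruence}: pulling a probabilistic choice out of one component interacts with the menu-combination operator $\otimes$ and with the interleaving of non-shared actions, so the weights appearing in $\prob^1$ are sums of products $\pi_i\rho_j$ over matching menus rather than single probabilities. I would handle it by the same regrouping of the double sum used in the proof of Theorem~\ref{thm:congruence} (Case~1.1), factoring the distribution of $(\pop_i \pi_i z_i)\po q$ into the product of the two components' distributions and recognizing the result as the distribution of $\pop_i \pi_i (z_i \po q)$; the synchronized case $\poa$ is the same argument without the interleaving subcase, and is correspondingly simpler.
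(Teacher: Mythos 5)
Your proposal takes essentially the same route as the paper, whose entire proof reads ``by structural induction, similarly to the proof of Theorem~\ref{thm:external}'': you perform that structural induction on the context, discharge each operator's one-step distributivity law by computing the functions of Definitions~\ref{def:P} and~\ref{def:pafter} exactly in the style of the proof of Theorem~\ref{thm:external}, and glue the steps together with Theorem~\ref{thm:congruence} and transitivity. Your write-up is a correct, fleshed-out version of what the paper leaves implicit, including the reuse of the double-sum regrouping from Case~1.1 of the congruence proof for the $\po$ and $\poa$ cases.
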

\begin{LONGVERSION}
\begin{proof}
By structural induction, similarly to the proof of Theorem
\ref{thm:external}.
\end{proof}
\end{LONGVERSION}

\section{Relationship between $\testing$ and
$\barbed$}\label{sec:relationship}

We establish our main result, namely that the testing equivalence
$\testing$ coincides with the probabilistic ready trace equivalence
$\barbed$. As an intermediate result, we prove that probabilistic
transitions do not add distinguishing power to the tests.
\begin{SHORTVERSION}The complete proofs of the results can be found
in \cite{testing_report}.\end{SHORTVERSION}

\begin{theorem}\label{thm:desno_levo}
Let $s$ and $t$ be two processes. If $s \barbed t$ then $s \testing
t$.
\end{theorem}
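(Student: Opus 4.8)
The plan is to show directly that $s \barbed t$ implies $\Result(s,T)=\Result(t,T)$ for every test $T$, proceeding by induction on the \emph{action-depth} of $T$, i.e.\ the length of the longest chain of action transitions reachable from the root of $T$ (probabilistic transitions not being counted). The conceptual observation driving the proof is that, in Definition~\ref{def:Results}, the weighting factor $\frac{a}{\sum_{b\in K}b}$ attached to a synchronization on $a$ depends only on $K=I(s)\cap I(T)$, hence only on the \emph{menu} currently offered by the process and on the test, and never on how $s$ branches below that menu. Thus $\Result(s,T)$ should depend on $s$ only through the first-level menu probabilities $P_s^1(\cdot)$ and the residual processes $s_{(\menu,a)}$, and both of these are preserved by $\barbed$.

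First I would record a preservation lemma: if $s\barbed t$ and $P_s^1(\menu)>0$, then $P_t^1(\menu)>0$ and $s_{(\menu,a)}\barbed t_{(\menu,a)}$ for every $a\in\menu$. This is immediate from Definition~\ref{def:P}, because for every $m\ge 1$ and every ready trace $\sigma$ the value of $P^m_{s_{(\menu,a)}}$ on $\sigma$ equals the value of $P^{m+1}_s$ on the ready trace obtained by prefixing $\sigma$ with $\menu,a$, and these latter values agree for $s$ and $t$ since $s\barbed t$.

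The heart of the argument is the following reduction, which I would establish for an arbitrary process $s$ and a \emph{nondeterministic} test $T$ with $T\not\atran{\success}$:
\begin{equation}
\Result(s,T)=\sum_{\menu\in 2^{\actset}}P_s^1(\menu)\sum_{a\in\menu\cap I(T)}\frac{a}{\sum_{b\in\menu\cap I(T)}b}\,\Result\bigl(s_{(\menu,a)},\,T_a\bigr). \tag{$\star$}
\end{equation}
To obtain $(\star)$ I would first resolve the (possibly) probabilistic root of $s$ by the second clause of Definition~\ref{def:Results}, reaching nondeterministic states $s_i$ with $s\pitran{\pi_i}s_i$; applying the synchronization clause to each $s_i$ and then grouping the summands according to the menu $\menu=I(s_i)$ produces the outer sum, with coefficient $\sum_{i:\,I(s_i)=\menu}\pi_i=P_s^1(\menu)$. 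It then remains to recognize that the normalized inner sum $\sum_{i:\,I(s_i)=\menu}\frac{\pi_i}{P_s^1(\menu)}\Result\bigl((s_i)_a,T_a\bigr)$ equals $\Result\bigl(s_{(\menu,a)},T_a\bigr)$; this is exactly what the construction of $s_{(\menu,a)}$ in Definition~\ref{def:pafter} delivers, once its unfolding of probabilistic targets is matched against the probabilistic clause of $\Result$. When $T$ is instead probabilistic, I would first expand it by the third clause of Definition~\ref{def:Results} as $\Result(s,T)=\sum_j\rho_j\Result(s,T_j)$ with each $T_j$ nondeterministic of action-depth at most that of $T$, and then apply $(\star)$ to every summand.

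Finally I would run the induction. The base case, action-depth $0$, is trivial: no synchronization can occur, so $\Result(\cdot,T)$ is computed purely from the $\success$-checks and the resolution of probabilistic choices and does not depend on its process argument at all. For the inductive step, every residual test $T_a$ (respectively $(T_j)_a$) appearing in $(\star)$ has strictly smaller action-depth than $T$, so the induction hypothesis together with the preservation lemma gives $\Result(s_{(\menu,a)},T_a)=\Result(t_{(\menu,a)},T_a)$ for each $\menu$ with $P_s^1(\menu)>0$, while the terms with $P_s^1(\menu)=0$ vanish on both sides. Since $P_s^1(\menu)=P_t^1(\menu)$ for all $\menu$ by the first clause of Definition~\ref{def:barbed}, and the symbolic weights depend only on $\menu$ and $T$, all constituents of $(\star)$ agree for $s$ and $t$, whence $\Result(s,T)=\Result(t,T)$; as $T$ is arbitrary, $s\testing t$. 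The main obstacle I anticipate is precisely the identification in $(\star)$ of the menu-grouped, normalized testing sum with $\Result(s_{(\menu,a)},T_a)$: matching the normalizing factor $P_s^1(\menu)$ and the unfolding of probabilistic targets in Definition~\ref{def:pafter} against the recursion of $\Result$ is where the two definitions must be reconciled with care.
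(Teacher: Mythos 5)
Your proof is correct and is essentially the paper's own argument: your decomposition $(\star)$ is exactly the identity the paper derives from Definitions~\ref{def:Results} and~\ref{def:pafter} (expressing $\Result(s,T)$ through the first-level menu probabilities $\prob_s^1(\menu)$ and the residual results $\Result(s_{(\menu,a)},T_{(\menu',a)})$), and your induction on the action-depth of $T$ together with the preservation lemma is the paper's induction on the minimal length of a distinguishing test, just phrased directly rather than contrapositively (the paper extracts from a differing residual a distinguishing ready trace by prefixing with $\menu_1,a_1$, which is your preservation lemma read backwards). There is no substantive difference in route.
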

\begin{LONGVERSION}
\begin{proof}
Suppose $s \not \testing t$.  There exists a test $T$ such that
$\Result(s,T) \not = \Result(t,T)$. W.l.g., assume that $s$ and $t$
start with probabilistic transitions. By Def. \ref{def:Results},

\begin{equation}\label{eq:proof2_1}
 \Result(s,T)   =    \sum_{T\ptran{\rho_j}T_j} \rho_j \sum_{s \ptran{\pi_i}s_i}{\pi_i  \sum_{a
\in I(s_i)\cap I(T_j)}{\frac{a}
 {\sum_{b \in I(s_i)\cap I(T_j)}{b}} \cdot \Result (s_{ia},T_{ja}).}} 
 \end{equation}

\noindent By Def. \ref{def:pafter}, from \eqref{eq:proof2_1} we
obtain

\begin{align}\label{eq:proof2_a}
\Result(s,T) = & \sum_{\menu':\prob_T^1(\menu')>0}\prob_T^1(\menu') \sum_{\menu:\prob_s^1(\menu)>0}\prob_s^1(\menu) \times  \qquad \qquad \qquad \qquad \qquad \qquad \qquad \notag\\
   & \qquad \qquad \qquad \times  \sum_{a \in \menu \cap \menu'} \frac{a}{\sum_{b \in \menu
\cap \menu'}b}  \Result(s_{(\menu,a)}, T_{(\menu',a)}).
\end{align}

Similarly  we obtain

\begin{align}\label{eq:proof2_aa}
\Result(t,T) = & \sum_{\menu':\prob_T^1(\menu')>0}\prob_T^1(\menu') \sum_{\menu:\prob_t^1(\menu)>0}\prob_t^1(\menu) \times  \qquad \qquad \qquad \qquad \qquad \qquad \qquad \notag\\
   & \qquad \qquad \qquad \times  \sum_{a \in \menu \cap \menu'} \frac{a}{\sum_{b \in \menu
\cap \menu'}b}  \Result(t_{(\menu,a)}, T_{(\menu',a)}).
\end{align}

Now, assume $s \barbed t$. Define a \emph{length of a test} to be
the length of the longest sequence of actions the test can perform
before executing the action $\success$.
  The proof is by induction on the minimal length of
 a nonprobabilistic test that distinguishes between $s$ and $t$.

Let $T$ be a test of length $1$ such that $\Result(s,T) \not =
\Result(t,T)$. From Def. \ref{def:Results} it follows that for every
process $u$,
\begin{equation}\label{eq:proof2_b}
\Result(u, T_{(\menu,a)})=\prob^1_{T_{(\menu',a)}}(\{\success\}).
\end{equation}

From \eqref{eq:proof2_a} and \eqref{eq:proof2_b} we have

\begin{align}\label{eq:proof2_c}
\Result(s,T) = & \sum_{\menu':\prob_T^1(\menu')>0}\prob_T^1(\menu') \sum_{\menu:\prob_s^1(\menu)>0}\prob_s^1(\menu) \times  \qquad \qquad \qquad \qquad \qquad \qquad \qquad \notag\\
   & \qquad \qquad \qquad \qquad \qquad \qquad \times  \sum_{a \in \menu \cap \menu'} \frac{a}{\sum_{b \in \menu
\cap \menu'}b}  \prob^1_{T_{(\menu',a)}}(\{\success\}).
\end{align}

Similarly we obtain

\begin{align}\label{eq:proof2_d}
\Result(t,T) = & \sum_{\menu':\prob_T^1(\menu')>0}\prob_T^1(\menu') \sum_{\menu:\prob_t^1(\menu)>0}\prob_t^1(\menu) \times  \qquad \qquad \qquad \qquad \qquad \qquad \qquad \notag\\
   & \qquad \qquad \qquad \qquad \qquad \qquad  \times  \sum_{a \in \menu \cap \menu'} \frac{a}{\sum_{b \in \menu
\cap \menu'}b}  \prob^1_{T_{(\menu',a)}}(\{\success\}).
\end{align}

 From \eqref{eq:proof2_c},\eqref{eq:proof2_d} and from the assumption that $\prob_s^1(\menu) =\prob_t^1(\menu)$
for every menu $\menu$,
 we obtain that $\Result(s,T) = \Result(t,T)$, i.e. we obtain contradiction. Therefore,
 there exists a menu $\menu$ such that $\prob^1_s(\menu)\not =\prob^1_t(\menu)$, i.e. $s \not \barbed t$. \\

Let $T$ be a  test of length greater than one such that
$\Result(s,T) \not = \Result(t,T)$
 If there exists a
menu $\menu$ such that $\prob_s^1(\menu)\not =\prob_t^1(\menu) $,
then $s \not \barbed t$ and the proof is over. Therefore, suppose
$\prob_s^1(\menu) =\prob_t^1(\menu) $ for every menu $\menu
\subseteq \actset$. From \eqref{eq:proof2_a} and
\eqref{eq:proof2_aa} we have that for some menus $\menu$, $\menu'$
and action $a \in \menu \cap \menu'$, it holds
$\Result(s_{(\menu,a)}, T_{(\menu',a)})$ $\not
 =$ $\Result(t_{(\menu,a)}, T_{(\menu',a)})$. Now, by the inductive
 assumption, we have $s_{(\menu,a)}\not \barbed t_{(\menu,a)}$, i.e.
 there exists a ready trace $(\menu_2,a_2, \ldots \menu_k)$  such that $\prob_{s_{(\menu,a)}}^{k-1}(\menu_k | \menu_2, a_2, \ldots a_{k-1})$
  $\not =$ $\prob_{t_{(\menu,a)}}^{k-1}(\menu_k | \menu_2, a_2, \ldots
  a_{k-1})$ (or they are not defined at the same time). From the last,  from the assumption that  $\prob_s^1(\menu)=
  \prob_t^1(\menu)>0$, and from Def. \ref{def:P} it follows  that $\prob_{s}^{k}(\menu_k | \menu,a, \menu_2, a_2, \ldots a_{k-1})$
  $\not =$ $\prob_{t}^{k}(\menu_k | \menu,a, \menu_2, a_2, \ldots
  a_{k-1})$ (or they are not defined at the same time), i.e. $s \not \barbed
  t$. This completes the proof of the theorem.
\end{proof}
\end{LONGVERSION}

\begin{LONGVERSION}
The following lemma, which considers the determinant of a certain
type of an almost-triangular matrix, shall be needed in the proof of
Theorem \ref{thm:levo_desno}.

\begin{lemma} \label{lem:determinant}Let $\mathbf{Q}$ be a square $n \times n$ matrix
with elements $q_{ij}$, for $1\leq i \leq n$ and $1\leq j \leq n$.
Suppose $q_{ij}\in \{0,1\}$ for $i>1$, $q_{ij}=1$ for $i=j+1$,
$q_{ij}=0$ for $i>j+1$, and $q_{1j}=\frac{Q_1}{Q_j}$ for $1\leq j
\leq n$, where $Q_1, Q_2\ldots Q_n$ are  irreducible, mutually prime
polynomials with positive variables, and of non-zero degrees. Then
the determinant of $\mathbf{Q}$ is a non-zero rational function.
\end{lemma}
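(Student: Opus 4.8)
The plan is to clear the denominators first and then to reduce the resulting polynomial determinant modulo one of the $Q_j$, so that the whole determinant collapses to a single explicit term whose nonvanishing is immediate. This avoids any attempt to control cancellation inside the full Leibniz expansion. First I would multiply the $j$-th column of $\mathbf{Q}$ by $Q_j$, for each $j$, obtaining a matrix $\mathbf{Q}'$ whose entries are all polynomials; since scaling column $j$ multiplies the determinant by $Q_j$, this gives $\det\mathbf{Q}=\det\mathbf{Q}'/(Q_1Q_2\cdots Q_n)$, and as $\prod_j Q_j\neq 0$ it suffices to prove $\det\mathbf{Q}'\neq 0$. The effect of the scaling is easy to track: the first row becomes $(Q_1,Q_1,\ldots,Q_1)$ because $q_{1j}Q_j=Q_1$; each subdiagonal entry $q_{j+1,j}=1$ becomes $Q_j$; the zeros below the subdiagonal stay $0$; and every remaining entry in a row $i\geq 2$, column $j$, becomes either $0$ or $Q_j$.

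The key step is to evaluate $\det\mathbf{Q}'$ modulo the polynomial $Q_n$ (I would dispose of the trivial case $n=1$, where $\det\mathbf{Q}=1$, separately, and assume $n\geq 2$). In column $n$ of $\mathbf{Q}'$ every entry except the topmost one, $Q_1$, is a multiple of $Q_n$, being $0$ or $Q_n$. Hence modulo $Q_n$ that column reduces to $(Q_1,0,\ldots,0)^{T}$, and cofactor expansion along it yields $\det\mathbf{Q}'\equiv(-1)^{1+n}Q_1\det M\pmod{Q_n}$, where $M$ is $\mathbf{Q}'$ with its first row and last column deleted. The crucial observation is that $M$ is upper triangular: its main diagonal consists precisely of the scaled subdiagonal entries $Q_1,Q_2,\ldots,Q_{n-1}$, whereas every entry strictly below that diagonal corresponds to a position strictly below the original subdiagonal and is therefore $0$. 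Consequently $\det M=Q_1Q_2\cdots Q_{n-1}$, giving $\det\mathbf{Q}'\equiv(-1)^{1+n}Q_1^2Q_2\cdots Q_{n-1}\pmod{Q_n}$.

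To conclude, I would invoke the hypotheses: since the $Q_j$ are pairwise coprime and $Q_n$ has positive degree, $Q_n$ divides none of $Q_1,\ldots,Q_{n-1}$, hence $Q_n\nmid Q_1^2Q_2\cdots Q_{n-1}$, so the residue computed above is nonzero in the quotient ring. Therefore $Q_n\nmid\det\mathbf{Q}'$; in particular $\det\mathbf{Q}'\neq 0$, and $\det\mathbf{Q}=\det\mathbf{Q}'/\prod_j Q_j$ is a nonzero rational function. (In fact this shows more, namely that $\det\mathbf{Q}$ has a genuine pole along $Q_n=0$, which is consistent with the intended use that the result of testing is typically a non-scalar function.)

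The hard part, and the reason a naive approach fails, is the danger of cancellation among the many terms of the determinant: because the $Q_j$ are \emph{specific} polynomials rather than independent indeterminates, distinct monomials $\prod_j Q_j^{a_j}$ need not be linearly independent, so one cannot simply isolate a unique top-degree monomial and argue it survives. The reduction-modulo-$Q_n$ device is exactly what circumvents this obstacle, since it leaves only one term and reduces the whole question to the divisibility statement $Q_n\nmid Q_1^2Q_2\cdots Q_{n-1}$. The only properties of the $Q_j$ that this argument actually consumes are pairwise coprimality and the positive degree of (at least) $Q_n$; irreducibility enters only insofar as it makes the coprimality hypothesis meaningful.
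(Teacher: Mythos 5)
Your proof is correct, and it takes a mildly different route to the same destination. Both arguments pivot on the same two facts: the minor obtained by deleting row $1$ and column $n$ is triangular, and $Q_n$, being irreducible of positive degree and coprime to $Q_1,\ldots,Q_{n-1}$, cannot divide a product of them. The paper expands $\Det(\mathbf{Q})$ along the first row, notes that each cofactor $\Det(\mathbf{Q_{1j}})$ is an integer (those minors are $0/1$ matrices) with $\Det(\mathbf{Q_{1n}})=1$, and argues by contradiction: if $\Det(\mathbf{Q})$ were the zero function, then $\frac{1}{Q_n}$ would be an integer linear combination of $\frac{1}{Q_1},\ldots,\frac{1}{Q_{n-1}}$, making $\frac{Q_1Q_2\cdots Q_{n-1}}{Q_n}$ a polynomial, which is impossible. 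You instead clear denominators by column scaling and reduce the polynomial determinant modulo $Q_n$, so that a single cofactor expansion along the last column leaves the one term $(-1)^{1+n}Q_1^2Q_2\cdots Q_{n-1}$; your divisibility obstruction $Q_n\nmid Q_1^2Q_2\cdots Q_{n-1}$ is essentially the same contradiction the paper reaches. What your packaging buys: the argument is direct rather than by contradiction, the $n=1$ case is handled explicitly, and it is slightly more general --- the reduction annihilates every last-column entry $q_{in}Q_n$ wholesale, so you never use the $0/1$ hypothesis on entries away from the last column and the subdiagonal, whereas the paper's step ``linear combination $\Rightarrow$ polynomial'' leans on the remaining cofactors being scalars. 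What the paper's version buys is brevity: it works with the rational-function matrix as given, without introducing the auxiliary polynomial matrix $\mathbf{Q}'$. Your closing remarks (the genuine pole along $Q_n=0$, and that only pairwise coprimality plus $\deg Q_n>0$ are actually consumed) are also accurate.
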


\begin{proof}
The determinant $\Det(\mathbf{Q})$ of  matrix $\mathbf{Q}$ can be
obtained from the general recursive formula
$\Det(\mathbf{Q})=\sum_{j=1}^{n}(-1)^{1+j}q_{1j}\Det(\mathbf{Q_{1j}})
$, where $\mathbf{Q_{1j}}$  is  the matrix obtained by deleting the
first row and the $j$-th column of $\mathbf{Q}$. Observe that
$\mathbf{Q_{1n}}$ is an upper-triangular matrix, the diagonal
elements of which are all equal to one. Since the determinant of a
triangular matrix is equal to the product of its diagonal elements,
we have $\Det(\mathbf{Q_{1n}})=1$. Therefore, the coefficient in
front of the rational function $\frac{Q_1}{Q_n}$ in
$\Det(\mathbf{Q})$ is equal to $1$. Suppose $\Det(\mathbf{Q})$ is a
zero-function. Then, the rational function $\frac{1}{Q_n}$ is equal
to a
 linear combination of $\frac{1}{Q_1}, \ldots \frac{1}{Q_{n-1}}$. This means  that the rational function $\frac{Q_1 \cdot
Q_2\cdot Q_{n-1}}{Q_n}$ is a polynomial. The last
 is impossible,
since, by assumption, the denominator is irreducible polynomial of
non-zero degree and is not contained in the numerator. Therefore,
$\Det(\mathbf{Q})$ is not a zero-function.
\end{proof}
\end{LONGVERSION}

\begin{theorem} \label{thm:levo_desno}
 Let $s$ and $t$ be two processes such that $s \not \barbed
t$. There exists a test $T$ that has no probabilistic transitions
such that $\Result(s,T) \not = \Result(t,T)$.
\end{theorem}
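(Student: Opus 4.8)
The plan is to prove the statement by induction on the minimal length $n$ of a ready trace witnessing $s \not\barbed t$, and at every stage to produce a test with \emph{no} probabilistic transitions. Write $P^1$ for the common value whenever $P_s^1 = P_t^1$. Observe first that if $P_s^1(\menu)\neq P_t^1(\menu)$ for some $\menu$, a distinguishing trace of length $1$ exists; hence in the inductive step I may assume $P_s^1 = P_t^1$. Throughout I specialise the formula \eqref{eq:proof2_a} to a \emph{deterministic} test $T$: since then $P_T^1(\menu')=1$ exactly for $\menu' = I(T)$, the outer sum collapses and $\Result(s,T) = \sum_{\menu}P_s^1(\menu)\sum_{a\in\menu\cap I(T)}\frac{a}{\sum_{b\in\menu\cap I(T)}b}\,\Result(s_{(\menu,a)},T_a)$.

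For the base case $n=1$ there is a menu $\menu_0$ with $P_s^1(\menu_0)\neq P_t^1(\menu_0)$. I would use, for each $\menu'\subseteq\actset$, the probability-free test $T_{\menu'}=\extch_{a\in\menu'}a.\success$. Because every branch succeeds immediately, $\Result(s_{(\menu,a)},T_a)=1$ and the weighting factors telescope, $\sum_{a\in\menu\cap\menu'}\frac{a}{\sum_{b\in\menu\cap\menu'}b}=1$ whenever $\menu\cap\menu'\neq\emptyset$; thus $\Result(s,T_{\menu'})=\sum_{\menu:\menu\cap\menu'\neq\emptyset}P_s^1(\menu)$, a scalar, equal to $1-\sum_{\menu\subseteq\actset\setminus\menu'}P_s^1(\menu)$. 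If no $T_{\menu'}$ distinguished $s$ and $t$, then $\sum_{\menu\subseteq C}P_s^1(\menu)=\sum_{\menu\subseteq C}P_t^1(\menu)$ for every $C\subseteq\actset$, and Möbius inversion over the subset lattice would force $P_s^1=P_t^1$, a contradiction. Hence some $T_{\menu'}$ distinguishes.

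For the inductive step ($n>1$, $P_s^1=P_t^1$), minimality together with Def.~\ref{def:P} supplies a menu $\menu_1$ with $P^1(\menu_1)>0$ and an action $a_1\in\menu_1$ such that $s_{(\menu_1,a_1)}\not\barbed t_{(\menu_1,a_1)}$ through a trace of length $n-1$; the induction hypothesis then gives a probability-free test $T'$ with $\Result(s_{(\menu_1,a_1)},T')\neq\Result(t_{(\menu_1,a_1)},T')$. For a menu $\menu'\ni a_1$ I would take the probability-free test $T_{\menu'}=\extch_{a\in\menu'}a.T_a$ where $T_{a_1}=T'$ and $T_a=\success$ for $a\neq a_1$. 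Letting every other branch succeed at once is the key trick: for $a\neq a_1$ we get $\Result(s_{(\menu,a)},\success)=\Result(t_{(\menu,a)},\success)=1$, so those summands cancel, and using $P_s^1=P_t^1$ only the $a_1$-branches survive:
\[ \Result(s,T_{\menu'})-\Result(t,T_{\menu'}) = a_1\sum_{\menu\ni a_1}P^1(\menu)\,\frac{\delta_\menu}{\sum_{b\in\menu\cap\menu'}b}, \]
where $\delta_\menu=\Result(s_{(\menu,a_1)},T')-\Result(t_{(\menu,a_1)},T')$ and $\delta_{\menu_1}\neq 0$.

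The crux, and the step I expect to be the main obstacle, is that the summands for different menus $\menu\ni a_1$ might cancel, so a single $\menu'$ need not distinguish; this is exactly what Lemma~\ref{lem:determinant} is designed to overcome. Enumerate the menus $\menu^{(1)}=\menu_1,\dots,\menu^{(m)}$ containing $a_1$ with $P^1>0$, and set $v_k=P^1(\menu^{(k)})\,\delta_{\menu^{(k)}}$, so the vector $\vec v$ is nonzero because $v_1\neq 0$. I would choose $m$ distinct test menus $\menu'_1,\dots,\menu'_m$ (for instance $\menu'_1=\actset$, then shrinking the overlaps with the $\menu^{(k)}$ down to $\{a_1\}$ in a staircase pattern) so that the corresponding $m$ values of the displayed difference form a linear system $\bigl(\Result(s,T_{\menu'_i})-\Result(t,T_{\menu'_i})\bigr)_i=\mathbf{Q}\vec v$, whose coefficient matrix, after elementary row operations and scaling by nonzero factors, reduces to the almost-triangular form of Lemma~\ref{lem:determinant}, its first-row denominators being the pairwise-coprime irreducible linear forms $Q_j=\sum_{b\in\menu^{(j)}}b$ (distinct sums of distinct action variables are non-associate, hence coprime). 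The lemma yields $\Det(\mathbf{Q})\neq 0$, so $\mathbf{Q}$ is invertible over the field of rational functions and $\mathbf{Q}\vec v\neq\vec 0$; consequently $\Result(s,T_{\menu'_i})\neq\Result(t,T_{\menu'_i})$ for some $i$, and $T_{\menu'_i}$ is the required probability-free distinguishing test. The delicate point on which I expect to spend the most care is the explicit staircase choice of the $\menu'_i$, together with the row/column reductions that realise precisely the matrix hypothesised in Lemma~\ref{lem:determinant} and the verification of irreducibility and mutual primality of its denominators.
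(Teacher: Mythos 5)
Your skeleton — induction on the minimal length of a distinguishing ready trace, a base case settled by tests $\extch_{a\in\menu'}a.\success$, and an inductive step that hangs the inherited test $T'$ on $a_1$, pads with immediately-succeeding branches, and reduces everything to non-singularity of a linear system over the field of rational functions — is the same as the paper's, and your base case (M\"obius inversion in place of the paper's minimal-menu argument) and your displayed difference formula are both correct. But the gap is exactly where you feared, and it is genuine: you never establish that $\mathbf{Q}$ is non-singular, and your recipe cannot do so as stated. First, the ``staircase'' is not a construction: a nested chain $\actset\supsetneq\cdots\supsetneq\{a_1\}$ has at most $|\actset|$ members, while the number $m$ of first-level menus containing $a_1$ can be as large as $2^{|\actset|-1}$, so you cannot even name $m$ distinct tests this way. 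Second, and more fundamentally, every entry of a row coming from a test $T_{\menu'}$ has the form $\frac{a_1}{a_1+\cdots}$: it is never $0$, and it equals $1$ only when $\menu^{(k)}\cap\menu'=\{a_1\}$. Hence for $m\geq 3$ no matrix assembled from raw test rows can have the almost-triangular $0/1$ shape that Lemma~\ref{lem:determinant} requires in rows $2,\ldots,m$. Appealing to ``elementary row operations'' to reach that shape is circular: over a field, every non-singular matrix can be carried by row operations to every other non-singular matrix, so exhibiting such a reduction presupposes precisely the non-singularity you are trying to prove.

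The paper escapes this circle by organizing the linear algebra differently. Its unknowns are not your per-menu quantities $v_k$ but sums over classes of menus sharing the same intersection pattern with the actions $b_j\notin\menu_1$ (the sets $\mathcal{M}_j,\mathcal{M}_j^C$); its success branches use only those $b_j$; and, crucially, the $0/1$ rows of its matrix $\mathbf{Q}_n$ are not test equations at all — they are the conclusions $x_{\ldots0\ldots}+x_{\ldots1\ldots}=0$ of an inner induction on the number $|J'|$ of success branches, i.e.\ of already-solved smaller systems. Only the first row comes from a test, which is what makes Lemma~\ref{lem:determinant} applicable, and the final contradiction uses minimality of $\menu_1$ (every strict sub-menu of $\menu_1$ containing $a_1$ has $\delta=0$) to extract $\delta_{\menu_1}=0$ from the surviving class sum. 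Your per-menu formulation could in principle be completed — one can show that the system over \emph{all} test menus $\menu'\ni a_1$ has full column rank $m$, for instance by proving that the alternating sums $\sum_{S'\subseteq S}(-1)^{|S\setminus S'|}\frac{a_1}{a_1+\sum_{b\in S'}b}$ are non-zero rational functions and applying M\"obius inversion over the subset lattice — but that argument appears nowhere in your proposal, and it is not what Lemma~\ref{lem:determinant} provides.
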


\begin{proof}\begin{SHORTVERSION} (Outline)
The proof is technically rather involved \cite{testing_report},
which is why here we give only the main lines of it.
\end{SHORTVERSION} We prove the theorem by induction on the minimal
length $m$ of a ready trace that distinguishes between $s$ and $t$.
For $m=1$, we prove that the test $T=\extch_{a\not \in
\menu}{a.\success}$, where $\menu$ is a menu with a minimal possible
number of actions such that $\prob_s^1(\menu)\not =
\prob_t^1(\menu)$, distinguishes between $s$ and $t$. For $m>1$ the
proof goes as follows. If $\prob_s^1(\menu) = \prob_t^1(\menu)$ for
every menu $\menu$, then by the inductive assumption it follows that
there exists a test $T_1$, menu $\menu_1$ and action $a_1 \in
\menu_1$ such that $\Result(s_{(\menu_1,a_1)},T_1)
 \not  = \Result(t_{(\menu_1,a_1)},T_1)$. We show that there exists
 a subset of the action set, say $\mathsf{Act}$, such that the test $T=a_1.T_1+\extch_{b \in
 \mathsf{Act}}.\success$ distinguishes between $s$ and $t$. To prove this,
we take $\menu_1$ to be
 the menu containing a minimal possible number of actions such that $\prob_s^1(\menu_1)>0$, $a_1 \in \menu_1$,  and
$\Result(s_{(\menu_1,a_1)},T_1)
 \not  = \Result(t_{(\menu_1,a_1)},T_1)$. Then we take the set
 $\mathsf{Act'}$ to consist of the actions that can be initially
 performed by $s$ but do not belong to menu $\menu_1$. Then, we show
 that there must exist a subset $\mathsf{Act}$ of $\mathsf{Act'}$
 such that the test $T=a_1.T_1+\extch_{b \in
 \mathsf{Act}}.\success$ distinguishes between $s$ and $t$ (otherwise, we obtain that $\Result(s_{(\menu_1,a_1)},T_1)
   = \Result(t_{(\menu_1,a_1)},T_1)$, which contradicts our assumption).

\begin{LONGVERSION}
We now proceed with a detailed presentation of the proof.

 From $s \not \barbed t$ and by Def. \ref{def:barbed}, there
must exist a ready trace $(\menu_1, a_1, \ldots \menu_m)$ such that
$\prob_s^m(\menu_m | \menu_1, a_1, \ldots a_{m-1}) \not =
\prob_t^m(\menu_m | \menu_1, a_1, \ldots a_{m-1})$. The proof is by
induction on $m$.\\

\textbf{Case 1 ($m=1$)} Suppose first that there exists a menu
$\menu$ such that
$\prob_s^1(\menu)\not = \prob_t^1(\menu)$. 
Let $\menu$ be a menu with a minimal possible number of actions such
that $\prob_s^1(\menu)\not = \prob_t^1(\menu)$. Take
$T=\extch_{a\not \in \menu}{a.\success}$. We have $\Result(s,T) = 1-
\sum_{\menu' \subseteq \menu}\prob_s^1(\menu')$, because the actions
of $s$ and $T$ will fail to synchronize if and only if the random
choice decides that menu $\menu$ or some menu $\menu' \subset \menu$
is offered to process $s$ initially. Similarly, $\Result(t,T) = 1-
\sum_{\menu' \subseteq \menu}\prob_t^1(\menu')$. Now, suppose that
$\Result(s,T)
  = \Result(t,T)$. We have
   $\sum_{\menu' \subseteq \menu}\prob_s^1(\menu')=
   \sum_{\menu' \subseteq \menu}\prob_t^1(\menu')$. From this and  $\prob_s^1(\menu)\not =
\prob_t^1(\menu)$, it follows that there  exists a menu $\menu'
\subset \menu$ such that also $\prob_s^1(\menu')\not =
\prob_t^1(\menu')$. But this contradicts the assumption that $\menu$
is a menu with a minimal possible number of actions such that
$\prob_s^1(\menu)\not =
\prob_t^1(\menu)$.\\

\textbf{Case 2 ($m>1$)} Suppose now that $\prob_s^1(\menu) =
\prob_t^1(\menu)$ for every menu $\menu$. Let $(\menu_1, a_1, \ldots
\menu_m)$ be a ready trace
such that 
$\prob_s^{m-1}(\menu_m | \menu_1, a_1, \ldots a_{m-1}) \not =
\prob_t^{m-1}(\menu_m | \menu_1, a_1, \ldots a_{m-1})$. From
$\prob_s^1(\menu_1) = \prob_t^1(\menu_1)$, and from  Definitions
\ref{def:pafter} and \ref{def:P}, it follows that
$\prob_{s_{(\menu_1,a_1)}}^{m-1}(\menu_m | \menu_2, a_2, \ldots
a_{m-1}) \not = \prob_{t_{(\menu_1,a_1)}}^{m-1}(\menu_m | \menu_2,
a_2, \ldots a_{m-1})$ (in case $m=2$,
$\prob_{s_{(\menu_1,a_1)}}^1(\menu_2 ) \not =
\prob_{t_{(\menu_1,a_1)}}^1(\menu_2) $). Now, by the inductive
assumption, there exists a non-probabilistic test $T_1$ such that
$\Result(s_{(\menu_1,a_1)},T_1)
 \not  = \Result(t_{(\menu_1,a_1)},T_1)$.\\

\textbf{Case 2.1} \ Suppose first that $a_1$ does not belong to any
first-level menu of $s$ other than $\menu_1$, i.e. that for every
menu $\menu$, $\prob_s^1(\menu)>0$ and $a_1 \in \menu$ implies
$\menu = \menu_1$.  Then the test $T=a_1.T_1$ distinguishes
between $s$  and $t$.\\

\textbf{Case 2.2} \ Suppose now that $a_1$ belongs to at least one
first-level menu of
 $s$ other than $\menu_1$, i.e.
there exists at least one menu $\menu \not = \menu_1$ such that
$\prob_s^1(\menu)>0$ and $a_1 \in \menu$.
 Without loss of
 generality, assume that $\menu_1$ is a menu with a minimal possible number of actions such that
$\prob_s^1(\menu_1)>0$, $a_1 \in \menu_1$,  and
$\Result(s_{(\menu_1,a_1)},T_1)
 \not  = \Result(t_{(\menu_1,a_1)},T_1)$.
 Let $\{b_j\}_{j \in J}$ be
the set of actions that appear in the first level of $s$ (and
therefore $t$) but not in $\menu_1$, i.e. $b \in \{b_j\}_{j \in J}$
if and only if $b \not \in \menu_1$ and there exists a menu $\menu$
such that $\prob_s^1(\menu)>0$, $b \in \menu$. We shall prove that
there exists $J' \subseteq J$ such  that the test $T=a_1.T_1+
\extch_{j \in J' }b_j.\success$ distinguishes between $s$ and $t$.
More concretely, we shall prove that, assuming the opposite, it
follows that $\Result(s_{(\menu_1,a_1)},T_1)
  = \Result(t_{(\menu_1,a_1)},T_1)$, thus obtaining contradiction.
  \\

\textbf{Case 2.2.a} \ Suppose first that $\{b_j\}_{j \in J} =
\emptyset$. This means that there are no actions other than those in
$\menu_1$, that appear in the first level of $s$. Therefore, all
menus $\menu$ for which $\prob_s^1(\menu)>0$ satisfy $\menu
\subseteq \menu_1$. We prove that the test $T=a_1.T_1$ distinguishes
between $s$ and $t$. Assume that $\Result(s,T)=\Result(t,T)$. From
the last and from Def. \ref{def:Results}, we obtain

\begin{equation}\label{eq:submenus}
\sum_{\menu:\prob_s^1(\menu)>0,a_1\in\menu \subseteq
\menu_1}(\Result (s_{(\menu,a_1)},T_1)-\Result
(t_{(\menu,a_1)},T_1))=0.
\end{equation}
 By
assumption, for every $\menu \subset \menu_1$ such that $a_1 \in
\menu$ it holds $\Result (s_{(\menu,a_1)},T_1)=\Result
(t_{(\menu,a_1)},T_1)$. Therefore, from \eqref{eq:submenus} we
obtain $\Result (s_{(\menu_1,a_1)},T_1)=\Result
(t_{(\menu_1,a_1)},T_1)$, which contradicts the assumption $\Result
(s_{(\menu_1,a_1)},T_1)\not =\Result (t_{(\menu_1,a_1)},T_1)$.\\

\textbf{Case 2.2.b} \ Suppose now that $\{b_j\}_{j \in J} \not =
\emptyset$.
 Given action
$b_i \in \{b_j\}_{j \in J}$, denote by $\mathcal{M}_i$ the set of
all first-level menus of $s$ that contain $b_i$ and $a_1$, i.e.
$\menu \in \mathcal{M}_i$ iff $\prob_s^1(\menu) >0$ and $b_i,a_1 \in
\menu$; denote by $\mathcal{M}_i^C$ the set of all first-level menus
of $s$ that do not contain $b_i$ but have $a_1$, i.e. $\menu \in
\mathcal{M}_i^C$ iff $\prob_s^1(\menu) >0$, $b_i \not \in \menu$ and
$a_1 \in \menu$.

Let $T=a_1.T_1+\extch_{j \in J' }b_j.\success$ for some
$J'=\{1,2,\ldots n\} \subseteq J$ and suppose $\Result(s,T)
  = \Result(t,T)$. Since $\prob_s^1(\menu)=\prob_t^1(\menu)$ for every menu $\menu$,
  observe that only if action $a_1$ is performed initially, it is
possible for the test $T=a_1.T_1+\extch_{j \in J' }b_j.\success$  to
make a difference between $s$ and $t$. Because of this and by
Definitions \ref{def:Results} and \ref{def:pafter}  it follows  that

 \begin{align}\label{eq:thebig}
&\sum_{\menu \in \mathcal{M}_n^C \cap \mathcal{M}_{n-1}^C \cap
\cdots \cap
\mathcal{M}_1^C}\frac{a_1}{a_1}\prob_s^1(\menu)(\Result(s_{(\menu,a_1)},T_1)-\Result(t_{(\menu,a_1)},T_1))  \notag\\
+&\sum_{\menu \in \mathcal{M}_n^C \cap \mathcal{M}_{n-1}^C \cap
\cdots \cap
\mathcal{M}_1}\frac{a_1}{a_1+b_1}\prob_s^1(\menu)(\Result(s_{(\menu,a_1)},T_1)-\Result(t_{(\menu,a_1)},T_1))  \notag\\
+&\cdots \notag\\
+&\sum_{\menu \in \mathcal{M}_n   \cap \cdots \cap
\mathcal{M}_1}\frac{a_1}{a_1+\sum_{j=1}^{n}b_j}\prob_s^1(\menu)(\Result(s_{(\menu,a_1)},T_1)-\Result(t_{(\menu,a_1)},T_1))\notag\\
=& \quad 0.
\end{align}

\noindent Each intersection appearing under the $\sum$-operators of
\eqref{eq:thebig} can be mapped bijectively to a binary number of
$n$ digits -- the i-th digit being $0$ if the intersection contains
$\mathcal{M}_{n+1-i}^C$, and $1$ if the intersection contains
$\mathcal{M}_{n+1-i}$. (For reasons that will become clear later,
the order of the indexing is reversed.)

Suppose  $\Result(s,T) = \Result(t,T)$ for every test
$T=a_1.T_1+\extch_{j \in J' }b_j.\success$, where $J'\subseteq J$.
We shall prove that, in this case, every sum
$\sum(\Result(s_{(\menu,a_1)},T_1)-\Result(t_{(\menu,a_1)},T_1))$
that appears in \eqref{eq:thebig} when $J'=J$ is equal to a
zero-function. In particular, the equality

\begin{equation}\label{eq:presek}
\sum_{\menu \in \bigcap_{j \in
J}\mathcal{M}_j^C}(\Result(s_{(\menu,a_1)},T_1)-\Result(t_{(\menu,a_1)},T_1))=0
\end{equation}

\noindent will hold. Note  that the set $\bigcap_{j \in
J}{\mathcal{M}_j^C}$ contains all first-level menus of $s$ that have
the action $a_1$ but do not have any other action that does not
appear in $\menu_1$. Therefore, $\bigcap_{j \in J}{\mathcal{M}_j^C}$
consists of the subsets of $\menu_1$ that contain $a_1$. Thus, the
equation \eqref{eq:presek} is equivalent to the equation
\eqref{eq:submenus} which leads to $\Result(s_{(\menu_1,a_1)},T_1)
=\Result(t_{(\menu_1,a_1)},T_1)$, i.e. to  contradiction. This would complete the proof of the theorem.\\

We now proceed with proving the above stated claim. We prove a more
general result,  namely that for $J' \subseteq J$, under assumption
that $\Result(s,T) = \Result(t,T)$ for every test
$T=a_1.T_1+\extch_{i \in J'' }b_i.\success$ such that $J'' \subseteq
J$ and $|J''|\leq|J'|$, it holds that every sum
$\sum(\Result(s_{(\menu,a_1)},T_1)-\Result(t_{(\menu,a_1)},T_1))$
that appears in \eqref{eq:thebig} is equal to zero. \\

Suppose first that $|J'|=1$, i.e. $J'=\{1\}$. Assume that
\begin{equation}\label{eq:eden1}
\Result(s,a_1.T_1) = \Result(t,a_1.T_1)
\end{equation}
  and
\begin{equation}\label{eq:eden2}
\Result(s,a_1.T_1+b_1.\success) = \Result(t,a_1.T_1+b_1.\success).
\end{equation}
 From \eqref{eq:eden1}, Def. \ref{def:Results}, and because $\prob_s^1(\menu)=\prob_t^1(\menu)$ for every menu $\menu$, we  obtain

\begin{align}\label{eq:b1}
\sum_{\menu \in \mathcal{M}_1 \cup \mathcal{M}_1^C
}\frac{a_1}{a_1}\prob_s^1(\menu)(\Result(s_{(\menu,a_1)},T_1)-\Result(t_{(\menu,a_1)},T_1))=0.
\end{align}
The equation \eqref{eq:thebig} turns into
\begin{align}\label{eq:b2}
& \sum_{\menu \in \mathcal{M}_1^C}\frac{a_1}{a_1}\prob_s^1(\menu)(\Result(s_{(\menu,a_1)},T_1)-\Result(t_{(\menu,a_1)},T_1))\notag\\
+ & \sum_{\menu \in
\mathcal{M}_1}\frac{a_1}{a_1+b_1}\prob_s^1(\menu)(\Result(s_{(\menu,a_1)},T_1)-\Result(t_{(\menu,a_1)},T_1))\notag
\\
= & \quad 0.
\end{align}

Denote $\sum_{\menu \in
\mathcal{M}_1^C}\prob_a^1(\menu)(\Result(s_{(\menu,a_1)},T_1)-\Result(t_{(\menu,a_1)},T_1))$
by $x_0$ and $\sum_{\menu \in
\mathcal{M}_1}\prob_a^1(\menu)(\Result(s_{(\menu,a_1)},T_1)-\Result(t_{(\menu,a_1)},T_1))$
by $x_1$. Our goal is to show  that $x_0=0$ and $x_1=0$, i.e. that
they are zero-functions.  From \eqref{eq:b1} and \eqref{eq:b2} we
obtain the following system of equations for the unknowns $x_0$ and
$x_1$:
$$
\begin{cases}{}
\frac{a_1}{a_1}x_0  +  \frac{a_1}{a_1+b_1}x_1  =0\\
x_0  +  x_1  =0,
\end{cases}
$$

 \noindent or in a matrix form $$ \mathbf{Q_1}\mathbf{x}=\mathbf{0},$$ where
$$\mathbf{Q_1}=\left(
                      \begin{array}{cc}
                        \frac{a_1}{a_1} & \frac{a_1}{a_1+b_1} \\
                        1 & 1 \\
                      \end{array}
                    \right), \mathbf{x}=\left(
                                            \begin{array}{c}
                                              x_1 \\
                                              x_2 \\
                                            \end{array}
                                          \right)
                    , \textrm{ and } \mathbf{0}=\left(
                                         \begin{array}{c}
                                           0 \\
                                           0 \\
                                         \end{array}
                                       \right).
                    $$
Since the determinant of the matrix $\mathbf{Q_1}$ is not a
zero-function, it follows that $x_0=0$ and $x_1=0$ is the only
solution of
the system.\\

To present a better intuition on the proof in the general case, we
shall also consider separately  the case $|J'|=2$. Let $J'=\{1,2\}$
and assume that $\Result(s,T) = \Result(t,T)$ for every test
$T=a_1.T_1+\extch_{i \in J'' }b_i.\success$ such that $J'' \subseteq
J$ and $|J''|\leq|J'|$. The equation \eqref{eq:thebig} turns into

 \begin{align}\label{eq:thebig2}
& \sum_{\menu \in \mathcal{M}_2^C \cap \mathcal{M}_1^C}\frac{a_1}{a_1}\prob_s^1(\menu)(\Result(s_{(\menu,a_1)},T_1)-\Result(t_{(\menu,a_1)},T_1))  \notag\\
+& \sum_{\menu \in \mathcal{M}_2^C \cap \mathcal{M}_1}\frac{a_1}{a_1+b_1}\prob_s^1(\menu)(\Result(s_{(\menu,a_1)},T_1)-\Result(t_{(\menu,a_1)},T_1))  \notag\\
+& \sum_{\menu \in \mathcal{M}_2 \cap \mathcal{M}_1^C}\frac{a_1}{a_1+b_2}\prob_s^1(\menu)(\Result(s_{(\menu,a_1)},T_1)-\Result(t_{(\menu,a_1)},T_1))  \notag\\
+& \sum_{\menu \in \mathcal{M}_2 \cap \mathcal{M}_1}\frac{a_1}{a_1+b_1+b_2}\prob_s^1(\menu)(\Result(s_{(\menu,a_1)},T_1)-\Result(t_{(\menu,a_1)},T_1))\notag\\
=& \qquad 0.
\end{align}

Denoting   $\sum_{\menu \in \mathcal{M}_2^C \cap \mathcal{M}_1^C
}\prob_s^1(\menu)(\Result(s_{(\menu,a_1)},T_1)-\Result(t_{(\menu,a_1)},T_1))$
by $x_{00}$ and so on, \eqref{eq:thebig2} turns into
\begin{align}
\frac{a_1}{a_1}x_{00} + \frac{a_1}{a_1+b_1}x_{01} +
\frac{a_1}{a_1+b_2} x_{10} + \frac{a_1}{a_1+b_1+b_2} x_{11}=0.
\end{align}

From $\sum_{\menu \in
\mathcal{M}_2^C}\prob_s^1(\menu)(\Result(s_{(\menu,a_1)},T_1)-\Result(t_{(\menu,a_1)},T_1))=0$
we obtain $x_{00}+x_{01}=0$, and from $\sum_{\menu \in
\mathcal{M}_2}\prob_s^1(\menu)(\Result(s_{(\menu,a_1)},T_1)-\Result(t_{(\menu,a_1)},T_1))=0$
we obtain $x_{10}+x_{11}=0$. Similarly, from $\sum_{\menu \in
\mathcal{M}_1}\prob_s^1(\menu)(\Result(s_{(\menu,a_1)},T_1)-\Result(t_{(\menu,a_1)},T_1))=0$
we obtain that $x_{01}+x_{11}=0$. Therefore, we have the following
system of equations:

$$
\begin{cases}{}
\frac{a_1}{a_1}x_{00} + \frac{a_1}{a_1+b_1}x_{01} +
\frac{a_1}{a_1+b_2} x_{10} + \frac{a_1}{a_1+b_1+b_2} x_{11}=0\\
x_{00}+x_{01}=0\\
x_{01}+x_{11}=0 \\
 x_{10}+x_{11}=0.
\end{cases}
$$

The main matrix of the system is $$\mathbf{Q_2}= \left(
  \begin{array}{cccc}
    \frac{a_1}{a_1} & \frac{a_1}{a_1+b_1} & \frac{a_1}{a_1+b_2} & \frac{a_1}{a_1+b_1+b_2} \\
    1 & 1 & 0 & 0 \\
    0 & 1 & 0 & 1 \\
    0 & 0 & 1 & 1 \\
  \end{array}
\right). $$

By Lemma \ref{lem:determinant}, $\Det(\mathbf{Q_2})$ is not a
zero-function, which implies that the vector of zero-functions is
the only solution of the above system of equations.\\

We now present how each matrix $\mathbf{Q_{n+1}}$ can be obtained
from the matrix $\mathbf{Q_{n}}$.

In general, for $\mathcal{M}_i^* \in
\{\mathcal{M}_i,\mathcal{M}_i^C\}$, it holds

\begin{align}
&\sum_{\menu \in (\bigcap_{i=1}^n{\mathcal{M}_i^*}) \cap
\mathcal{M}_{n+1}}
\prob_s^1(\menu)(\Result(s_{(\menu,a_1)},T_1)-\Result(t_{(\menu,a_1)},T_1))
\notag \\
+& \sum_{\menu \in (\bigcap_{i=1}^n{\mathcal{M}_i^*}) \cap
\mathcal{M}_{n+1}^{C}}
\prob_s^1(\menu)(\Result(s_{(\menu,a_1)},T_1)-\Result(t_{(\menu,a_1)},T_1))
\notag \\
=& \ \ \ \ \sum_{\menu \in (\bigcap_{i=1}^n{\mathcal{M}_i^*})}
\prob_s^1(\menu)(\Result(s_{(\menu,a_1)},T_1)-\Result(t_{(\menu,a_1)},T_1)).
\end{align}

This means that, in the general case, each solution $x_{i_1 i_2
\ldots i_n}=0$ of the system $\mathbf{Q_n}\mathbf{x}=\mathbf{0}$
generates  the following equations for the next system: \[x_{i_1 i_2
\ldots i_{k} 0 i_{k+1} \ldots i_{n}} + x_{i_1 i_2 \ldots i_{k} 1
i_{k+1} \ldots
i_{n}}=0,\] for every $0\leq k \leq n$. 
For example, in case $|J'|=3$ we obtain the following matrix:
$$\mathbf{Q_3}=\left(
              \begin{array}{cccccccc}
                \frac{a_1}{a_1} & \frac{a_1}{a_1+b_1} & \frac{a_1}{a_1+b_2} & \frac{a_1}{a_1+b_1+b_2} & \frac{a_1}{a_1+b_3} & \frac{a_1}{a_1+b_1+b_3} & \frac{a_1}{a_1+b_2+b_3} & \frac{a_1}{a_1+b_1+b_2+b_3} \\
                1 & 1 & 0 & 0 & 0 & 0 & 0 & 0 \\
                0 & 1 & 0 & 1 & 0 & 0 & 0 & 0 \\
                0 & 0 & 1 & 1 & 0 & 0 & 0 & 0 \\
                0 & 0 & 0 & 1 & 0 & 0 & 0 & 1 \\
                0 & 0 & 0 & 0 & 1 & 1 & 0 & 0 \\
                0 & 0 & 0 & 0 & 0 & 1 & 0 & 1 \\
                0 & 0 & 0 & 0 & 0 & 0 & 1 & 1 \\
              \end{array}
            \right).
$$
Note that each row of $\mathbf{Q_3}$, except the first one, contains
exactly two $1$'s, at positions whose binary representations differ
in exactly one place (for example at the positions $001$ and $011$).

Informally, the general algorithm for obtaining the elements
$q_{n+1}^{ij}$ of a $2^{n+1}\times 2^{n+1}$ matrix
$\mathbf{Q}_{n+1}$ from matrix $\mathbf{Q}_n$, assuming
$\mathbf{Q}_n$ is non-singular, is as follows. First, initialize all
elements of $\mathbf{Q}_{n+1}$ to zero. Then, copy $\mathbf{Q}_{n}$
into the upper left corner of $\mathbf{Q}_{n+1}$. Then, copy
$\mathbf{Q}_{n}$, excluding the first row, into the lower right
corner of $\mathbf{Q}_{n+1}$. Then, assign $1$ to $q_{n+1}^{ij}$ for
$i=2^n+1$ and $j \in \{2^n, 2^{n+1}\}$. Finally, add the appropriate
new rational fractions in the second half of the first row of
$\mathbf{Q}_{n+1}$. The key observation is that in this way, we
obtain again a matrix such that each row, except the first one,
contains exactly two $1$'s, at positions whose binary
representations differ in exactly one place. Formally,
\[ q_{n+1}^{ij}=
\begin{cases}{}
q_n^{ij} & \textrm{if } 1 \leq i \leq 2^n \textrm{ and } j \leq 2^n, \\
1 & \textrm{if } i=2^n+1 \textrm{ and } j \in \{2^n, 2^{n+1}\}, \\
q_n^{ij} & \textrm{if } 2^n+1<i \textrm{ and } 2^n <j, \\
\frac{a_1}{a_1+\sum_{k \in K}{b_k}+b_{n+1}} & \textrm{if } i=1,j>2^n, \textrm{ and }  q_n^{(i) (j-2^n)}= \frac{a_1}{a_1+\sum_{k \in K}{b_k}}\\
0 & \textrm{otherwise}.
\end{cases}
\]

Assuming matrix $\mathbf{Q}_{n}$ satisfies the conditions of Lemma
\ref{lem:determinant}, it easily follows that  matrix
$\mathbf{Q}_{n+1}$ also satisfies the conditions of Lemma
\ref{lem:determinant}. Therefore, its determinant is  not a zero
function. This means that the system
$\mathbf{Q}_{n+1}\mathbf{x}=\mathbf{0}$ has only zero-functions as
solutions, which we were aiming to prove. Therefore, the proof of
the theorem is complete.
\end{LONGVERSION}
\end{proof}

From Theorems \ref{thm:desno_levo} and \ref{thm:levo_desno} the
following statements directly follow.
\begin{corollary} For arbitrary processes $s$ and $t$,  $s \testing t$
if and only if $s \barbed t$.
\end{corollary}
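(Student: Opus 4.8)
The plan is to read the biconditional $s \testing t \iff s \barbed t$ as two separate implications and to discharge each of them by a direct appeal to one of the two preceding theorems, which have been stated precisely so that they match the two directions. No independent argument is needed at this level; the whole content of the corollary is the observation that Theorems \ref{thm:desno_levo} and \ref{thm:levo_desno} together cover both implications.

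For the direction $s \barbed t \Rightarrow s \testing t$ I would simply invoke Theorem \ref{thm:desno_levo}, which asserts exactly this. For the converse $s \testing t \Rightarrow s \barbed t$ I would argue by contraposition: assuming $s \not\barbed t$, Theorem \ref{thm:levo_desno} produces a test $T$ (in fact one with no probabilistic transitions) for which $\Result(s,T) \neq \Result(t,T)$. Recalling Definition \ref{def:testing}, testing equivalence demands that $\Result(s,T)$ and $\Result(t,T)$ coincide for \emph{every} test, so the existence of a single separating $T$ already gives $s \not\testing t$. Chaining the two implications yields the equivalence.

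The only things demanding care are bookkeeping: pairing each theorem with the correct implication, and remembering that $\testing$ is defined through a universal quantifier over tests, so that one distinguishing test is enough to refute it. There is accordingly no genuine obstacle in the corollary itself — all of the difficulty has been front-loaded into Theorems \ref{thm:desno_levo} and \ref{thm:levo_desno}, and especially into the latter, where the real work lies in constructing, for any ready-trace-inequivalent pair, an explicit non-probabilistic separating test via the rational-function weighting and the non-vanishing determinant supplied by Lemma \ref{lem:determinant}. It is worth noting in passing that, because Theorem \ref{thm:levo_desno} delivers a test \emph{without} probabilistic transitions, the corollary simultaneously shows that allowing probabilistic tests adds no distinguishing power; this strengthening comes for free and is not required for the equivalence.
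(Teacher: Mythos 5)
Your proposal is correct and matches the paper exactly: the paper derives the corollary as an immediate consequence of Theorems \ref{thm:desno_levo} and \ref{thm:levo_desno}, using the former for the direction $s \barbed t \Rightarrow s \testing t$ and the latter (contrapositively, since one distinguishing test refutes $\testing$) for the converse. Your bookkeeping, including the observation that non-probabilistic tests suffice, is precisely what the paper's second corollary records.
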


\begin{corollary} For arbitrary processes $s$ and $t$,  $s \not \testing
t$ if and only if there exists a test $T$ without probabilistic
transitions such that $\Result(s,T)\not = \Result(t,T)$.
\end{corollary}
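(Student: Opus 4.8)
The statement is already in a convenient form: we are handed $s \not\barbed t$ and must manufacture a probability-free test whose results on $s$ and $t$ differ. The plan is to induct on the minimal length $m$ of a ready trace on which $s$ and $t$ disagree in the sense of Def.~\ref{def:barbed}. Throughout I would lean on the shape of $\Result$ from Def.~\ref{def:Results}: when a reactive process in a nondeterministic state with menu $\menu'$ meets a test offering menu $\menu''$, only the common actions $K = \menu' \cap \menu''$ survive, and synchronisation on $a \in K$ carries the symbolic weight $\frac{a}{\sum_{b \in K} b}$. It is precisely this symbolic bookkeeping that will let one family of tests encode enough linear equations to pin down the menu probabilities.

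For the base case $m = 1$ there is a menu $\menu$ with $\prob_s^1(\menu) \neq \prob_t^1(\menu)$; I would take such a $\menu$ with the fewest actions and probe with $T = \extch_{a \notin \menu} a.\success$. A short computation with Def.~\ref{def:Results} shows that a nondeterministic state with menu $\menu'$ returns $0$ when $\menu' \subseteq \menu$ (empty $K$, no synchronisation) and $1$ otherwise (any shared action leads straight to $\success$), so $\Result(s,T) = 1 - \sum_{\menu' \subseteq \menu} \prob_s^1(\menu')$ and likewise for $t$. Were these equal, the cumulative sums over all subsets of $\menu$ would coincide; subtracting the $\menu$-term and using $\prob_s^1(\menu) \neq \prob_t^1(\menu)$ forces some proper subset of $\menu$ to witness a disagreement, contradicting minimality. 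Hence $T$ separates $s$ and $t$.

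For $m > 1$ I may assume $\prob_s^1 = \prob_t^1$ on every menu (else the base case applies). Taking a minimal-length distinguishing ready trace $(\menu_1, a_1, \ldots, \menu_m)$ and peeling off its first step through Defs.~\ref{def:pafter} and~\ref{def:P} gives $s_{(\menu_1,a_1)} \not\barbed t_{(\menu_1,a_1)}$ along a trace of length $m-1$, so the induction hypothesis supplies a probability-free test $T_1$ with $\Result(s_{(\menu_1,a_1)},T_1) \neq \Result(t_{(\menu_1,a_1)},T_1)$. Here lies \textbf{the main obstacle}: the naive lift $T = a_1.T_1$ need not work, because $a_1$ may sit in several first-level menus of $s$, so testing with $a_1.T_1$ computes $\sum_{\menu \ni a_1} \prob_s^1(\menu)\,\Result(s_{(\menu,a_1)},T_1)$, a sum in which the discrepancy at $\menu_1$ can be masked by the contributions of the other menus — exactly the ``cloning'' of nondeterminism flagged in the introduction.

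To break the symmetry I would first choose $\menu_1$ of least size among first-level menus with $\prob_s^1(\menu_1)>0$, $a_1 \in \menu_1$, and a $T_1$-discrepancy; if $a_1$ lies in no other such menu the plain lift $a_1.T_1$ already separates $s$ and $t$, and otherwise I let $\{b_j\}_{j \in J}$ be the first-level actions of $s$ (hence of $t$) outside $\menu_1$ and enrich the test to $T = a_1.T_1 + \extch_{j \in J'} b_j.\success$ for $J' \subseteq J$. Now the common menu for the $a_1$-step is $\{a_1\} \cup (\menu \cap \{b_j\}_{j \in J'})$, so the weight on $a_1$ becomes $\frac{a_1}{a_1 + \sum_{b_j \in \menu} b_j}$, sensitive to precisely which $b_j$ each menu carries (the $b_j$-synchronisations go straight to $\success$ and, since $\prob_s^1 = \prob_t^1$, contribute identically to both results, so only the $a_1$-branch can separate $s$ and $t$). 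Assuming for contradiction that $\Result(s,T) = \Result(t,T)$ for every $J' \subseteq J$ yields a linear system whose unknowns $x_{i_1 \cdots i_n}$ are the aggregates $\sum \prob_s^1(\menu)\bigl(\Result(s_{(\menu,a_1)},T_1) - \Result(t_{(\menu,a_1)},T_1)\bigr)$ over menus with a fixed pattern of containing or omitting each $b_j$; the full test $J'=J$ contributes one equation with first-row coefficients $\frac{a_1}{a_1 + \sum b_j}$, while the smaller tests collapse neighbouring unknowns into the $0/1$ rows $x_{\ldots 0 \ldots} + x_{\ldots 1 \ldots} = 0$. The coefficient matrix $\mathbf{Q}_n$ is then almost-triangular of exactly the form of Lemma~\ref{lem:determinant}, so $\Det(\mathbf{Q}_n) \neq 0$ and the only solution is the all-zero vector. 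The unknown indexed by ``$a_1$ present, every $b_j$ absent'' aggregates over the subsets of $\menu_1$ containing $a_1$; its vanishing, together with the minimality of $\menu_1$ (which makes every proper-subset term cancel), forces $\Result(s_{(\menu_1,a_1)},T_1) = \Result(t_{(\menu_1,a_1)},T_1)$ — the contradiction. I expect the real work to be the combinatorial assembly of the equations into $\mathbf{Q}_n$ and the check that its denominators $a_1 + \sum_{k} b_k$ satisfy the irreducible, mutually-prime hypotheses of Lemma~\ref{lem:determinant}.
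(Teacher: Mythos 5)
Your proposal has a genuine gap, and it sits in your very first sentence: the corollary hands you $s \not\testing t$, \emph{not} $s \not\barbed t$. These are different hypotheses a priori --- $\testing$ is defined by quantifying over all tests (Def.~\ref{def:testing}), while $\barbed$ compares conditional ready-trace probabilities (Def.~\ref{def:barbed}) --- and their equivalence is precisely what the paper's two main theorems establish. By silently substituting one for the other, what you actually prove is the implication ``$s \not\barbed t$ implies the existence of a probability-free distinguishing test'', which is Theorem~\ref{thm:levo_desno} of the paper, not the stated corollary. To close the forward direction of the corollary you still need $s \not\testing t \Rightarrow s \not\barbed t$, i.e.\ the contrapositive of Theorem~\ref{thm:desno_levo} ($s \barbed t$ implies $s \testing t$), which in the paper is a separate, non-trivial induction on the length of tests; nothing in your proposal supplies, or even cites, this step. (The converse direction of the corollary, which you also leave implicit, is at least genuinely trivial: any test with differing results, probabilistic or not, witnesses $s \not\testing t$ directly by Def.~\ref{def:testing}.)

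That said, the body of your argument --- induction on the minimal length $m$ of a distinguishing ready trace, the base-case test $\extch_{a \not\in \menu}{a.\success}$ with $\menu$ of minimal size, the enriched tests $a_1.T_1 + \extch_{j \in J'}b_j.\success$, the reduction to a linear system over aggregates indexed by the containment pattern of the $b_j$'s, and the non-vanishing determinant from Lemma~\ref{lem:determinant} --- is a faithful and correct reconstruction of the paper's proof of Theorem~\ref{thm:levo_desno}. The paper, however, proves Theorems~\ref{thm:desno_levo} and~\ref{thm:levo_desno} first and then obtains the corollary in one line by chaining them: $s \not\testing t \Rightarrow s \not\barbed t \Rightarrow$ a probability-free distinguishing test exists $\Rightarrow s \not\testing t$. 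You rebuilt the hard second arrow but omitted the first, so as a proof of the corollary as stated your argument is incomplete; it would become complete by prepending an appeal to Theorem~\ref{thm:desno_levo} (or to the corollary immediately preceding this one).
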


\section{Conclusion, future work, and related work }\label{sec:conclusion}
 \paragraph{Concluding remarks} We have proposed a testing equivalence in the style of \cite{DH84} for
processes where the internal nondeterminism is quantified with
probabilities. The testing semantics allows distribution of external
choice over probabilistic choice, i.e. accomplishes unobservability
of the internal probabilistic choice. The definition exploits a new
method for labeling the synchronized actions using rational
functions over the action labels, which, we believe, is of
independent interest. We have also developed an alternative
characterization of the testing equivalence, namely as a
probabilistic version of the ready trace equivalence
~\cite{Pnueli85, BBKready}. The definition of the latter uses
Bayesian probability. It is intuitive and can be easily justified by
a black box testing scenario akin to those in \cite{spectrum1,
CSV07}. We have also shown that it is congruence for all standard
operators for the given model, including asynchronous parallel
composition and priority.

\paragraph{Internal nondeterminism} It can be anticipated by now
that combining internal choice, probabilistic choice and parallel
composition is challenging. Again  ``cloning'' the internal
nondeterminism after the probabilistic choice in a parallel context
can ``erase'' the probabilities, which disallows distribution of
prefix over probabilistic choice (this phenomenon has been also
studied in ~\cite{CP07,CLSV06, GRS07, Lowe93, GD09,segalaPhD}).
\begin{figure}[t]\centering \large
$\xymatrix@R=0.3cm@C=0.1cm{
& X \ar@{-->}[dl]_{\frac{1}{2}} \ar@{-->}[dr]^{\frac{1}{2}} \\ 
          \nst\ar[d]_{wrt}  && \nst\ar[d]_{wrt}  \\
          \nst\ar[d]_{rev}  && \nst\ar[d]_{rev} \\
          \nst\ar[d]_{head}  && \nst\ar[d]_{tail}  \\
          \nst && \nst \\ 
           }$
 \quad $\po$ \quad
$\xymatrix@R=0.3cm@C=0.1cm{
&  Y \ar[d]_{wrt} \\
& \nst \ar[dl]_{} \ar[dr]^{} \\ 
          \nst\ar[d]_{rev}  && \nst\ar[d]_{rev}  \\
          \nst\ar[d]_{head}  && \nst\ar[d]_{tail} \\
          \nst\ar[d]_{\smiley} && \nst\ar[d]_{\smiley} \\ 
          \nst && \nst \\ 
           }
           $
           \quad $\longrightarrow$ \quad
  $
\xymatrix@R=0.3cm@C=0.1cm{
    & & & \nst  \ar@{-->}[dll]_{\frac{1}{2}} \ar@{-->}[drr]^{\frac{1}{2}} \\
    & \nst \ar[d]_{wrt}  & & & & \nst  \ar[d]_{wrt}  \\
& \nst \ar[dl]_{} \ar[dr]^{} & & & & \nst  \ar[dl]_{} \ar[dr]^{} \\ 
          \nst\ar[d]_{rev}  && \nst\ar[d]_{rev}  & & \nst\ar[d]_{rev} && \nst\ar[d]_{rev} \\ 
          \nst\ar[d]_{head}  && \nst  & & \nst && \nst\ar[d]_{tail} \\
          \nst\ar[d]_{\smiley} &&  & &  && \nst\ar[d]_{\smiley} \\
          \nst &&  & &  && \nst \\
           }$
\newline
 \caption{Synchronized coin tosser($X$) and  result-guesser($Y$) } \hbox{}\vspace{-0.5cm}
 \label{fig:game1}
\end{figure}
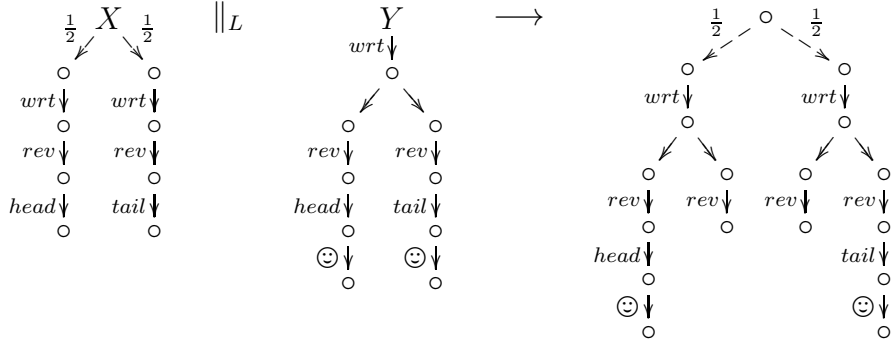
Namely, consider the following game. The player $X$ tosses a fair
coin and hides the outcome. Player $Y$ guesses the outcome of the
tossing and writes it down. While he is writing down the result,
player $X$ waits (i.e. he may write down something meaningless).
Then they both agree to reveal their outcomes, i.e. $X$ to uncover
the coin and $Y$ to show what he/she has written.\footnote{Note the
difference between this game and the example in Sec.
\ref{sec:introduction} -- in the former there is no external choice
in the original processes, while in the latter they don't have
internal nondeterminism.} Obviously, the probability that the second
player has guessed correctly equals $\frac{1}{2}$. However, the
resulting graph for the synchronization of both players (Fig.
\ref{fig:game1}) suggests that there is a strategy such that player
$Y$ can always guess the correct result. On  the other hand, if
process $\bar{X}=wrt.rev.(head \oplus_{\frac{1}{2}} tail)$ is
synchronized with $Y$, the resulting graph suggests that the
probability of reporting a $\smiley$ action is exactly
$\frac{1}{2}$. This prevents equating processes $X$ and $\bar{X}$,
i.e. allowing distribution of prefix over internal probabilistic
choice.
 Indeed, in
presence of internal nondeterminism, the testing equivalence of
\cite{YL92} and its variants have all been characterized as
simulations \cite{JY02,LSV07,DGHM08}. The proposed solutions
\cite{CP07,CLSV06, GRS07, GD09} to the problem with parallel
composition
 suggest that the process composition
needs to ``remember'' the outcome of the internal choice that a
component makes locally.
 To solve the problem in our setting in the
lines of these solutions, we  also plan to enrich the internal
transitions with labels that cannot communicate. Before composing
all labels would be different. If the original process has, for
example, two outgoing internal transitions labeled with $l_1$ and
$l_2$, then the composed process shall have transitions labeled with
$\frac{l_1}{l_1+l_2}$ and $\frac{l_2}{l_1+l_2}$.  Fig.
\ref{fig:game2} presents the result of testing process $X$ of Fig.
\ref{fig:game1} with process $Y$, assuming the internal transitions
of $Y$ are labeled with $l_1$ and $l_2$. Two processes would not be
distinguished by a test if both results of testing are equal modulo
isomorphism on the labels set. However, we leave the formal
definition of this testing semantics for future work.

\paragraph{Related Work}
\begin{wrapfigure}{r}{48mm}\centering
\large \hbox{}\vspace{-5mm} $ \xymatrix@R=0.3cm@C=0.2cm{
    & & & \nst  \ar@{-->}[dll]_{\frac{1}{2}} \ar@{-->}[drr]^{\frac{1}{2}} \\
    & \nst \ar[d]_{\frac{w}{w}}  & & & & \nst  \ar[d]_{\frac{w}{w}}  \\
& \nst \ar[dl]_{\frac{l_1}{l_1+l_2}} \ar[dr]^{\frac{l_2}{l_1+l_2}} & & & & \nst  \ar[dl]_{\frac{l_1}{l_1+l_2}} \ar[dr]^{\frac{l_2}{l_1+l_2}} \\ 
           \nst\ar[d]_{\frac{r}{r}}  && \nst\ar[d]_{\frac{r}{r}}  & & \nst\ar[d]_{\frac{r}{r}} && \nst\ar[d]_{\frac{r}{r}} \\
          \nst\ar[d]_{\frac{h}{h}}  &&   & & && \nst\ar[d]_{\frac{t}{t}} \\
           \nst\ar[d]_{\smiley} &&  & &  && \nst\ar[d]_{\smiley} \\
           \nst &&  & &  && \nst
           }$
\newline
 \caption{Testing with internal transitions.} \hbox{}\vspace{-0.5cm}
 \hbox{}\vspace{-1cm}
 \label{fig:game2}
\end{wrapfigure}
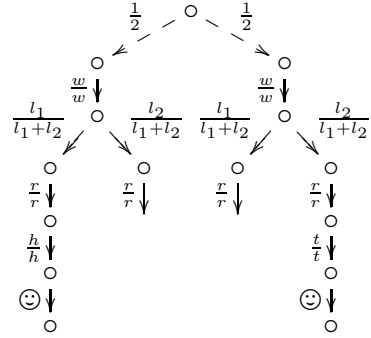
Process equivalences  that allow distribution of prefix over
probabilistic choice (i.e. unobservability of the random choice)
have been a research topic ever since probabilities were introduced
in concurrency theory (see
e.g.\cite{Lowe93,Seidel95,segalaPhD,morgan96refinementoriented,KN98b,CCVPP03,CP07,CSV07}).
However,  only \cite{Lowe93}, \cite{Seidel95}, and, under certain
conditions, \cite{CP07}, also allow distribution of external choice
over probabilistic, i.e. equate  processes $s$ and $\bar{s}$ of Fig.
\ref{fig_machine}. In~\cite{Lowe93} probabilistic versions of broom
(ready/failure) and barbed (ready/failure trace) equivalences are
defined. These definitions use ``probability functions'' that
 compute the maximal probability for a ready trace to occur
(i.e. they do not generate probability spaces over the set of ready
traces), which makes it hard to construct corresponding ``black-
box'' testing scenarios. In~\cite{Seidel95}, in the model with
external choice, a process is defined as conditional probability
measure over sequences of actions. This semantics also  identifies
processes $(a+b) \oplus_{\frac{1}{2}} c$ and $(a+c)
\oplus_{\frac{1}{2}} b$. Obviously, this is  not desirable.
In~\cite{CP07} processes are enriched with labels, and a testing
equivalence is defined by means of  schedulers that synchronize with
processes on the process labels. For a certain labeling, processes
$s$ and $\bar{s}$ can be equated. Although this is an elegant and
compositional solution to the problem of overestimating
probabilities in testing semantics, we believe that our approach is
more feasible in practice. In fact, the task of the schedulers and
the purpose of the process labels in \cite{CP07} in our testing
semantics have been accomplished by the rational functions formed
from the action labels.

\paragraph{Acknowledgements} We thank Jos Baeten and Erik de Vink
for their valuable comments on a draft version of this paper.

\bibliographystyle{plain}

\bibliography{testing1}{}

\begin{thebibliography}{10}

\bibitem{BBKready}
J.~C.~M. Baeten, J.~A. Bergstra, and J.~W. Klop.
\newblock Ready-trace semantics for concrete process algebra with the priority
  operator.
\newblock {\em The Computer Journal}, 30(6):498--506, 1987.

\bibitem{CCVPP03}
D.~Cazorla, F.~Cuartero, V.~Valero, F.~L. Pelayo, and J.~J. Pardo.
\newblock Algebraic theory of probabilistic and nondeterministic processes.
\newblock {\em Journal of Logic and Algebraic Programming}, 55(1-2):57--103,
  2003.

\bibitem{CP07}
K.~Chatzikokolakis and C.~Palamidessi.
\newblock Making random choices invisible to the scheduler.
\newblock In {\em Proc. CONCUR 2007}, volume 4703 of {\em LNCS}, pages 42--58,
  2007.

\bibitem{CSV07}
L.~Cheug, M.~I.~A. Stoelinga, and F.~W. Vaandrager.
\newblock A testing scenario for probabilistic processes.
\newblock {\em Journal of the ACM}, 54(6):29:1--29:45, 2007.

\bibitem{CLSV06}
L.~Cheung, N.~Lynch, R.~Segala, and F.~Vaandrager.
\newblock Switched {PIOA}: Parallel composition via distributed scheduling.
\newblock {\em Theoret. Comp. Science}, 365(1-2):83--108, 2006.

\bibitem{DH84}
R.~{De Nicola} and M.C.B. Hennessy.
\newblock Testing equivalences for processes.
\newblock {\em Theoret. Comp. Science}, 34:83--133, 1984.

\bibitem{DGHM08}
Y.~Deng, R.~J. van Glabbeek, M.~Hennessy, and C.~Morgan.
\newblock Characterising testing preorders for finite probabilistic processes.
\newblock {\em Logical Methods in Computer Science}, 4(4:4):1--33, 2008.

\bibitem{GRS07}
F.~D. Garcia, P.~van Rossum, and A.~Sokolova.
\newblock Probabilistic anonymity and admissible schedulers.
\newblock {\em CoRR}, abs/0706.1019, 2007.

\bibitem{GD09}
S.~Giro and P.~D'Argenio.
\newblock On the expressive power of schedulers in distributed probabilistic
  systems.
\newblock In {\em Proc. QAPL'09}, ENTCS (to appear), 2009.

\bibitem{spectrum1}
R.J.~van Glabbeek.
\newblock The linear time -- branching time spectrum {I}; {T}he semantics of
  concrete, sequential processes.
\newblock In {\em Handbook of Process Algebra}, chapter~1, pages 3--99.
  Elsevier, 2001.

\bibitem{csp}
C.A.R. Hoare.
\newblock {\em Communicating Sequential Processes}.
\newblock Prentice Hall, 1985.

\bibitem{JY02}
B.~Jonsson and Y.~Wang.
\newblock Testing preorders for probabilistic processes can be characterized by
  simulations.
\newblock {\em Theoret. Comp. Science}, 282(1):33--51, 2002.

\bibitem{KN98b}
M.~Kwiatkowska and G.~Norman.
\newblock A testing equivalence for reactive probabilistic processes.
\newblock volume 16(2) of {\em ENTCS}, 1998.

\bibitem{LS91}
K.G. Larsen and A.~Skou.
\newblock Bisimulation through probabilistic testing.
\newblock {\em Information and Computation}, 94:1--28, 1991.

\bibitem{lindley}
D.~V. Lindley.
\newblock {\em Introduction to Probability and Statistics from a Bayesian
  Viewpoint}.
\newblock Cambridge University Press, 1980.

\bibitem{Lowe93}
G.~Lowe.
\newblock Representing nondeterministic and probabilistic behaviour in reactive
  processes.
\newblock Technical Report PRG-TR-11-93, Oxford Univ. Comp. Labs, 1993.

\bibitem{LSV07}
N.~Lynch, R.~Segala, and F.~Vaandrager.
\newblock Observing branching structure through probabilistic contexts.
\newblock {\em SIAM J. Comput.}, 37(4):977--1013, 2007.

\bibitem{morgan96refinementoriented}
C.~Morgan, A.~McIver, K.~Seidel, and J.~W. Sanders.
\newblock Refinement-oriented probability for {CSP}.
\newblock {\em Formal Aspects of Computing}, 8(6):617--647, 1996.

\bibitem{D87}
R.~De Nicola.
\newblock Extensional equivalences for transition systems.
\newblock {\em Acta Informatica}, 24(2):211--237, 1987.

\bibitem{PNM07}
M.~C. Palmeri, R.~De Nicola, and M.~Massink.
\newblock Basic observables for probabilistic may testing.
\newblock In {\em QEST '07}, pages 189--200. IEEE, 2007.

\bibitem{Pnueli85}
A.~Pnueli.
\newblock Linear and branching structures in the semantics and logics of
  reactive systems.
\newblock In {\em ICALP'85}, volume 194 of {\em LNCS}, pages 15--32, 1985.

\bibitem{segalaPhD}
R.~Segala.
\newblock {\em Modeling and verification of randomized distributed real-time
  systems}.
\newblock PhD thesis, MIT, 1995.

\bibitem{segala_testing96}
R.~Segala.
\newblock Testing probabilistic automata.
\newblock In {\em Proc. CONCUR '96}, volume 1119 of {\em LNCS}, pages 299--314,
  1996.

\bibitem{Seidel95}
K.~Seidel.
\newblock Probabilistic communicating processes.
\newblock {\em Theoret. Comp. Science}, 152:219--249, 1995.

\bibitem{YL92}
Y.~Wang and K.~G. Larsen.
\newblock Testing probabilistic and nondeterministic processes.
\newblock In {\em Proceedings of the IFIP TC6/WG6.1 Twelth International
  Symposium on Protocol Specification, Testing and Verification XII}, pages
  47--61, 1992.

\end{thebibliography}

\end{document}